\documentclass[preprint,12pt]{elsarticle}

\usepackage{amssymb}
\usepackage{amsthm}
\usepackage{amsmath}
\usepackage{dsfont}

\DeclareMathOperator*{\argmin}{arg\,min}
\usepackage{caption}
\usepackage{subcaption}
\usepackage{float}

\usepackage{amsfonts}

\usepackage{xcolor}

\newtheorem{prop}{Proposition}

 \usepackage{algpseudocode}
 \usepackage{lscape}
 
 \usepackage{bm}
 \newcommand{\x}{{\bm x}}
 \newcommand{\nub}{{\bm \nu}}
 \newcommand{\N}{{\bm N}}

 \newcommand{\red}[1]{\textcolor{black}{#1}}
 
  \newcommand{\lch}[1]{\textcolor{black}{#1}}
 %\usepackage{lineno}
%\modulolinenumbers[5]
%\linenumbers
 
\begin{document}

\begin{frontmatter}

\title{Sequential detection of a temporary change in multivariate time series}

\author[inst1,inst2]{Victor Watson}

\affiliation[inst1]{organization={CEA, DAM, DIF},
            addressline={~}, 
            city={Arpajon},
            postcode={ F-91297}, 
            country={France}}

\affiliation[inst2]{organization={Univ Bretagne Sud, CNRS UMR 6205, LMBA},
            addressline={Rue Andre Lwoff}, 
            city={Vannes},
            postcode={F-56000}, 
            country={France}}
            
\author[inst2]{François Septier}

\author[inst1]{Patrick Armand}

\author[inst1]{Christophe~Duchenne}

\begin{abstract}

In this work, we aim to provide a new and efficient recursive detection method for temporarily monitored signals.  Motivated by the case of the propagation of an event \lch{over} a field of sensors, we \red{assumed} that the change in the statistical properties in the monitored signals can only be temporary. Unfortunately, to our best knowledge, existing recursive and simple detection techniques such as the ones based on the cumulative sum (CUSUM) do not consider the temporary aspect of the change in a multivariate time series. In this paper, we propose a novel simple and efficient sequential detection algorithm, named Temporary-Event-CUSUM (TE-CUSUM). \red{By combining} with a new adaptive way to aggregate local CUSUM variables from each data stream, we empirically show that the TE-CUSUM has a very good detection rate in the case of an event passing through a field of sensors in a very noisy environment.

\end{abstract}

\begin{keyword}

sequential detection \sep multivariate time series \sep CUSUM \sep temporary event

\end{keyword}

\end{frontmatter}

\section{Introduction}

The multiplication of industrial sites near populated areas increases the danger for populations in cases of an unexpected release of an hazardous compound. Densely populated areas can also be a target for ill-intention people \lch{who would} release some toxic material and cause many victims. In such cases early detection can be crucial. If these sensitive areas are monitored, waiting for the level of toxic compound to be sufficient so that it is unambiguously monitored by the sensors can have the consequence of being too late in one's response to a threat. Sequential change-point detection uses the statistics of a data stream to detect an abnormality in a signal while the signal is still low. This means that one could detect a small concentration of a chemical in the air before a level threatening human life is reached\lch{. One would detect} an abnormal radioactivity level due to an irradiating source, for instance contained in a dirty bomb \lch{before explosion}. These sequential detection techniques can be used to detect the presence of a pollutant in the air such as in \cite{rajaona2015}. Early detection allows estimation techniques such as in \cite{septier20} to start monitoring the data at the right moment and facilitates the convergence to a solution while decreasing the computational cost. It is also used for early seismic detection \cite{Popescu2014}, or early detection of infected people during a pandemic \cite{Braca2021}, but can also be applied to many other fields such as \cite{Aue2011} and \cite{Shbat2019}. 

The CUSUM (CUmulative SUM) technique \cite{Page54} is a powerful univariate sequential change-point detection tool on which are based many detection techniques. The extension of the CUSUM to multivariate cases is not strait-forward and has been the object of many considerations in the process-control community (see \cite{Golosnoy2009,Mei2010,Kurt2018,Tartakovsky14_GLRT,rovatsos2020a,Banerjee2015,Xie2013}). Moreover, the issue of the possible non-synchronicity of the monitoring between the sensors in multivariate cases for the detection of temporary change remains a research open question. Indeed, the existing ways to deal with these problems require to lose the recursive computation of the test statistics necessary to trigger detection. The temporary change is not a common consideration in the process-control community as when a process gets out-of-control, it rarely gets back in-control. When we extend the sequential detection technique to some other physical problems such as the ones cited earlier, this back in-control scenario is what we expect as a sensor can be exposed only for a limited duration.

In this paper we propose a new multivariate CUSUM-based technique to deal with temporary changes without losing the recursive computation. Indeed, the Temporary-Event-CUSUM (TE-CUSUM) does not require the change to be permanent or synchronous between data streams to be detected. \red{We believe this novel TE-CUSUM will be relevant to many application cases such as when a pollutant released into the air is dispersed through a wide area and low concentrations are recorded by different sensors at different times. To our best knowledge existing methods would require simultaneous sensors' exposures or would be too computationally expensive methods for their online practical use. }% to be used online.}
We also developed a new adaptive method to combine the local test statistics so it increases the performances of the TE-CUSUM when the subset of sensor which is affected with the signal is unknown.

This paper is organised as follows\red{. In} the second section, we introduce our detection problem and the CUSUM technique as a Generalised-likelihood ratio test (GLRT). We set the principle of the proposed TE-CUSUM and show its equivalence with the CUSUM for univariate cases. In the third section, we extend the model problem to multivariate temporary events and make a quick review of the existing methods for multivariate sequential change-point detection. We then introduce a new strategy \red{that combines} local statistics with a novel adaptive censoring method and also demonstrate the efficiency of the TE-CUSUM in multivariate sequential detection cases. The fourth section is a validation test in which we compare the efficiency of the different techniques for monitoring the dispersion of a pollutant over a field of sensors.

\section{Sequential change-point detection in univariate time series}

The change-point detection problem in univariate time series can be formulated as the following hypothesis test: 
\begin{align}
    \begin{split}
        {\cal{H}}_0:&p(x_1,\cdot\cdot\cdot,x_n) =  \prod_{k=1}^n f_0(x_k|\theta_0) \\
          {\cal{H}}_1:&p(x_1,\cdot\cdot\cdot,x_n;\nu) = \prod _{k=1}^\nu f_0(x_k|\theta_0)  \prod _{k=\nu+1}^n f_1(x_k|\theta_1)
    \end{split}\label{eq3}
\end{align}

This represents a two case scenario, the first one marked by the hypothesis ${\cal{H}}_0$ for which every sample $x_k$ with $k\leq n$ follows $f_0(x_k|\theta_0)$, the second one stating that there is a time \red{$\nu \in [1 ; n]$} such that $x_k$ with $(k>\nu)$ starts to follow $f_1(x_k|\theta_1)$.

This leads us to the associate likelihood ratio:
\begin{equation}
    \Lambda_n^\nu = \prod_{k =\nu+1}^n \frac{f_1(x_k|\theta_1)}{f_0(x_k|\theta_0)} 
    \label{eq4}
\end{equation}
 Comparing $\Lambda_n^\nu$ to a threshold allows us to define a statistical test sequentially computed to decide between the two hypothesis.

\subsection{Generalised likelihood ratio test and CUSUM}

One problem about the likelihood ratio test of Equation (\ref{eq4}) is that the knowledge of the change-point $\nu$ is needed. In such a situation (unknown parameter in the likelihood distribution), it is common to use a generalised likelihood ratio test (GLRT) \cite{Trees} which is defined in our problem as: 
\begin{equation}
   V_n = \max_{0 \leq \nu < n}(\Lambda_n^\nu) = \max_{0 \leq \nu < n} \prod_{k=\nu+1}^n L_k 
   \label{eq5}
\end{equation}
with,
\begin{equation}
    L_k = \frac{f_1(x_k|\theta_1)}{f_0(x_k|\theta_0)}
\end{equation}
while the change point can be estimated with:
\begin{equation}
   \hat{\nu} = \underset{0 \leq \nu < n}{\operatorname{argmax}} \prod_{k=\nu+1}^n L_k 
   \label{eq6}
\end{equation}

The criterion $V_n$ increases when $L_k > 1$ and decreases when $L_k < 1$. If $k > \nu$, the hypothesis ${\cal{H}}_{1}$ is true and the ratio $L_k$ has a better chance to be greater than 1. $V_n$ will overall increase even if monotony is far from guaranteed. We can then compare $V_n$ to a threshold to trigger detection sequentially once a novel observation is received. In such a sequential setting, since we are interested in the quickest detection method, it is also important to consider the detection delay \red{time} instead of just the probability of detection \cite{Tartakovsky14_GLRT}. 

The generalised likelihood ratio in Equation (\ref{eq5}) can be rewritten in the following recursive form which allows its integration in online systems:
\begin{equation}
    V_n = \max(1,V_{n-1})L_n, ~~~ n\geq 1, V_0=1 
    \label{eq7}
\end{equation}

\subsubsection{CUSUM principle}

The CUSUM technique was first introduced by Page in 1954 \cite{Page54}. This algorithm has been proposed in order to optimise both the detection delay and the average run-length to false alarm (ARL2FA) which is the average time between two false alarms \cite{Tartakovsky14_GLRT}. \red{To compute the ARL2FA, we used the cyclical steady-state ARL decribed by \cite{knoth21}}.  This procedure can be seen as a sequential algorithm to recursively compute the GLRT defined by Equation (\ref{eq5}). By using the log transform of Equation (\ref{eq7}), the CUSUM test statistic is indeed simply given by:
\begin{align}
    \begin{split}
        W_n &= \max(0,W_{n-1})+\log(L_n) , ~~~ n\geq 1, W_0=0 
    \end{split}\label{eq8}
\end{align}

By just computing a sum at each time sample and comparing $W_n$ to a threshold, one can have a robust online detection technique. The question of estimating the change point $\nu$ can be solved easily by expanding Equation (\ref{eq6}) as:
\begin{align}
    \begin{split}
        \hat{\nu} &= \underset{0 \leq \nu < n}{\operatorname{argmax}} \prod_{k=\nu+1}^n L_k\\
        &= \underset{0 \leq \nu < n}{\operatorname{argmax}} \sum_{k=\nu+1}^n \log(L_k), \\
         & =\underset{0 \leq \nu < n}{\operatorname{argmax}} \sum_{k=1}^n \log(L_k)- \sum_{k'=1}^\nu \log(L_{k'}) \\
         &=\underset{0\leq k<n}{\operatorname{argmin}}(S_k). 
    \end{split}
    \label{eq9}
\end{align}
with:
\begin{equation}
    S_k = \sum_{i=1}^k \log \frac{f_1(x_i|\theta_1)}{f_0(x_i|\theta_0)},~ \text{and} ~ S_0 = 0.
    \label{eq9_1}
\end{equation}

\subsubsection{Example}

To illustrate the CUSUM, let us consider a change of mean in a single data stream composed of independent Gaussian random variables: 
\begin{equation}
    \begin{array}{cc}
        f(x_k |\theta_0) & = {\cal{N}}(x_k; \mu_0, \sigma^2) \\
        f(x_k |\theta_1) & = {\cal{N}}(x_k; \mu_1, \sigma^2)
    \end{array}
    \label{eq10}
\end{equation}
In this case the CUSUM test statistic can be computed as:
\begin{equation}
    W_n = \max\left(0, W_{n-1} + \frac{\delta_\mu}{\sigma^2}\left(x_n -\mu_0 -\frac{\delta_\mu}{2}\right)\right)
\end{equation}
\red{where},
\begin{equation}
    \delta_\mu = \mu_1-\mu_0
\end{equation}
\begin{figure}[H]
\hspace{0cm}
    \includegraphics[width=1\hsize]{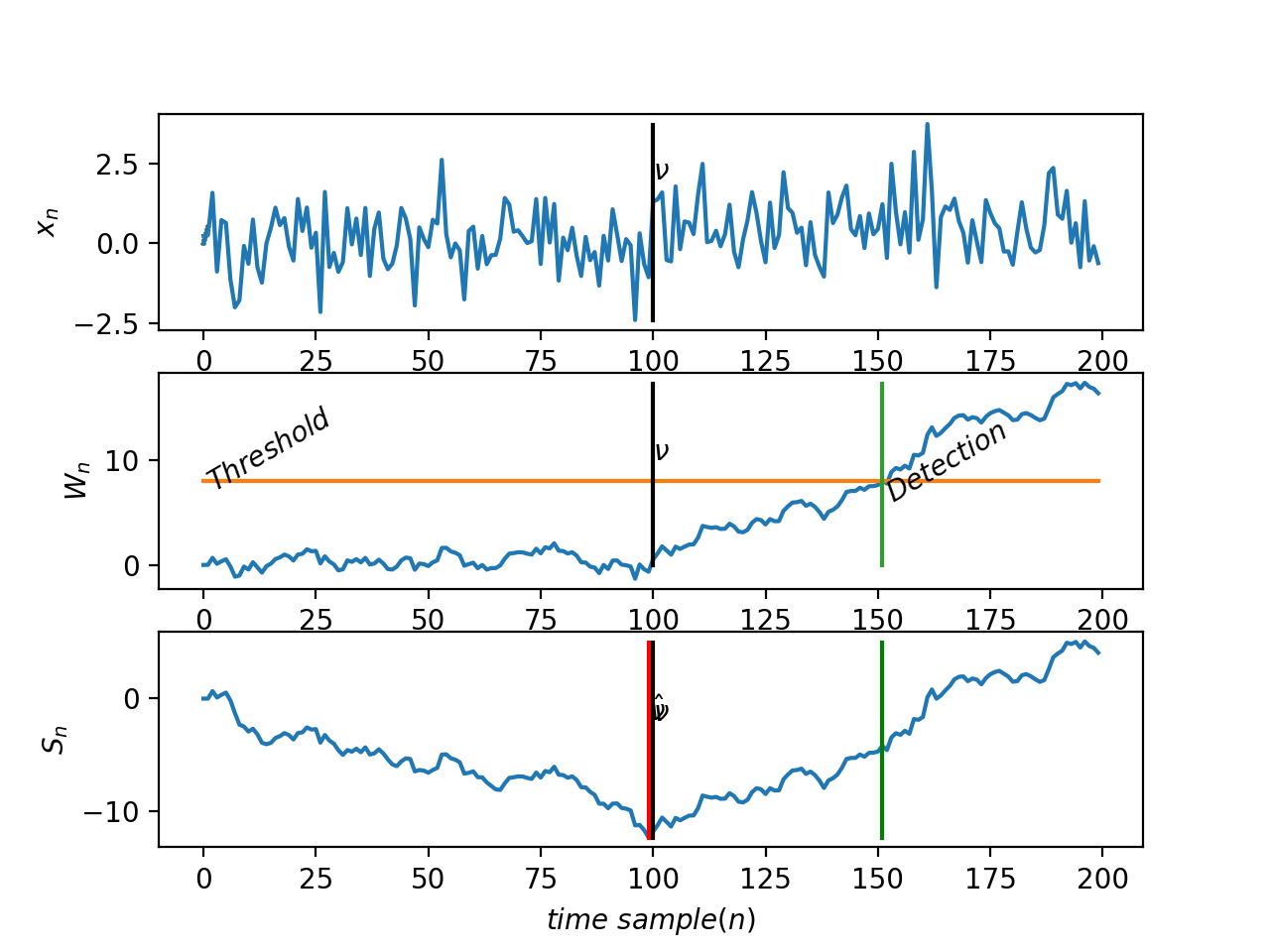}
    \caption{Example of detection of a change of mean in a single data stream  from $\mu_0 = 0$ to $\mu_1 = 0.1$ with a Gaussian noise distribution of standard deviation $\sigma = 1$. The data stream is plotted at the top with the change-point time indicated by a vertical line. The CUSUM variable Wn is plotted on the middle figure with the value of the chosen threshold indicated by the horizontal red line and the time of detection indicated by the green vertical line. The variable from Equation (\ref{eq9_1}) is plotted at the bottom with the estimated change point indicated by the vertical red line.}
    \label{fig4}
\end{figure} 

Figure \ref{fig4} empirically shows that the CUSUM technique is able to detect a change in the mean of a signal which is not obvious by looking only \red{at} the time series $x_n$. \red{The method has a delay} to detect the change-point (51 \lch{time samples} in this case) but it is able to detect it nonetheless. At the bottom of the figure we can see that $\argmin(S_n)$, defined by Equation (\ref{eq9_1}), gives us an estimate of the change point. \red{This estimate can also be found at the last time $W_n = 0$} .  
The CUSUM can be used to detect any changing parameter \cite{Lee2003} even if it is most commonly used to detect a change of mean or variance. As in any detection technique there is a balance to make between detection rate and false alarm, here between the average detection delay and the ARL2FA. A way to deal with the setting of the method is to determine what ARL2FA is tolerable, set the detection threshold (to which $W$ is compared) to get the wanted ARL2FA and then check what average detection delay is obtained. When comparing several methods, one can set the thresholds so that the ARL2FA is the same among all of them and compare the average detection delays to determine which gives the quickest detection.

\subsection{Finite moving average (FMA)}
Concerned by the cases in which the change is temporary, \cite{Tartakovsky_2021} proposes a method to detect this change by computing a likelihood ratio test on a moving time window of the signal defined as:
\begin{equation}
    Z_n^w = \sum_{k = n-w+1}^n S_k
    \label{eq10_1}
\end{equation}
 which depends on the window length $w$. This test statistic, $Z_n^w$ is then compared to a threshold to trigger or not a detection. In \cite{Tartakovsky_2021}, the authors compared this approach to the CUSUM when the amplitude of a change in mean is lower than expected and when the change duration is finite or even if the change is intermittent. These characteristics are of the utmost interest for our purpose.

This technique \red{is similar to the MOSUM (moving sum) \cite{noonan20} and} requires to either memorise $w$ values of the likelihood ratio $S_k$ or to compute $w$ times more operations every time sample than the CUSUM technique. Also, this method seems to be sensitive to the difference between the length of the window and the duration of the signal to detect.   

\subsection{Temporary-Event-CUSUM}

In this paper we introduce a new technique called Temporary-Event-CUSUM (TE-CUSUM). 
Because the change is transitory, the model we consider is defined through the two hypotheses:
\begin{align}
    \begin{split}
        {\cal{H}}_0:&p(x_1,\cdot\cdot\cdot,x_n) =  \prod_{k=1}^n f_0(x_k|\theta_0) \\
         {\cal{H}}_1:&p(x_1,\cdot\cdot\cdot,x_n;\nu,N) = \prod _{k=1}^\nu f_0(x_k|\theta_0)  \prod _{k=\nu+1}^N f_1(x_k|\theta_1) \prod _{k=N+1}^n f_0(x_k|\theta_0)
    \end{split}\label{eq3_2}
\end{align}

\begin{prop}\label{prop_TransSigCUSUM}
The test statistic obtained by solving the generalised likelihood ratio test of Equation (\ref{eq3_2}) can be recursively obtained as follows:
\begin{equation}
    G_n = \max(G_{n-1},W_n), G_0 =0, 
    \label{eq15}
\end{equation}
with $W_n$ being the CUSUM test statistic defined in Equation (\ref{eq8}).
\end{prop}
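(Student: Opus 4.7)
The plan is to unroll the GLRT implicitly defined by the two-parameter model of Equation (\ref{eq3_2}), peel the outer maximum off by the end-time $N$, and then identify what remains with the CUSUM statistic $W_n$.

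First, I would write down the log-likelihood ratio for a fixed admissible pair $(\nu, N)$ with $0\le \nu < N\le n$. Because the $f_0$ factors on the pre-change and post-change blocks cancel between numerator and denominator, the ratio collapses to $\sum_{k=\nu+1}^{N}\log L_k = S_N - S_\nu$, with $S_k$ as in Equation (\ref{eq9_1}). Allowing the no-change case (which has log-ratio $0$) by permitting $\nu = N$, the generalised log-likelihood ratio statistic is
\begin{equation*}
    G_n \;=\; \max_{0 \le \nu \le N \le n}\bigl(S_N - S_\nu\bigr),
\end{equation*}
and in particular $G_0 = 0$.

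Second, I would split the outer maximum according to the value of $N$. Any pair with $N\le n-1$ is by definition already captured by $G_{n-1}$, leaving only the terms with $N=n$:
\begin{equation*}
    G_n \;=\; \max\!\Bigl(G_{n-1},\; S_n - \min_{0 \le \nu \le n} S_\nu\Bigr).
\end{equation*}
Third, I would connect the trailing term to $W_n$. A one-line induction on Equation (\ref{eq8}) gives the closed form $W_n = S_n - \min_{0 \le k \le n-1} S_k$, from which
\begin{equation*}
    S_n - \min_{0 \le \nu \le n} S_\nu \;=\; \max(W_n,\,0).
\end{equation*}
Since $G_0 = 0$ and the recursion trivially preserves non-negativity, $G_{n-1}\ge 0$, so $\max\!\bigl(G_{n-1},\max(W_n,0)\bigr) = \max(G_{n-1},W_n)$, which is precisely Equation (\ref{eq15}).

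The main conceptual step, and the only one that is not a direct manipulation of maxima, is the identity $W_n = S_n - \min_{0 \le k \le n-1} S_k$: this is what converts the GLRT's running minimum over candidate starting points into Page's CUSUM recursion. Without it, one only arrives at $G_n = \max\!\bigl(G_{n-1},\max(0,W_n)\bigr)$, and the clean form stated in the proposition remains hidden by a superfluous truncation at zero.
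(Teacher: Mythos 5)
Your proof is correct and follows essentially the same route as the paper's: decompose the double maximisation over $(\nu,N)$ by the end-time, recognise the inner maximisation over $\nu$ as the CUSUM statistic, and conclude $G_n=\max(G_{n-1},W_n)$. The only difference is that you make explicit two points the paper glosses over --- the closed form $W_n=S_n-\min_{0\le k\le n-1}S_k$ and the absorption of the truncation $\max(W_n,0)$ using $G_{n-1}\ge 0$ --- which tightens rather than changes the argument.
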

\begin{proof}%{Proposition \ref{prop_TransSigCUSUM}}
The GLRT of Equation (\ref{eq3_2}) can be written as:
\begin{align}
    \begin{split}
        G_n  & =\max_{[\nu,N]}\sum_{k=\nu}^{N \leq n} \log \frac{f_{1}(X_{k}|\theta_{1})}{f_{0}(X_{0}|\theta_{0})}\\
        & =\max_{0<k \leq n} \left( \max_\nu \sum_{i=\nu}^k \log \frac{f_1(X_i|\theta_0)}{f_0(X_i|\theta_1)} \right)\\
        &= \max_{0\leq k <n} (W_k)
    \end{split}\label{eq14}
\end{align}
which leads straightforwardly to the recursive form introduced in Prop. \ref{prop_TransSigCUSUM}.
\end{proof}

In Equation (\ref{eq14}) it is implicit that $\nu$ is the last change-point before $k$. Moreover, causality forces $\hat{\nu} < n$ when $W_n$ is computed.

On univariate cases, TE-CUSUM is strictly equivalent to the standard CUSUM because a test on $G_n$ is equivalent to a test on $W_n$.
From Equation (\ref{eq15}), we have $G_n = \max_{0 \leq k \leq n}(W_k)$. Therefore if $W_k > G_{k-1}$ then $G_k = W_k$ if $W_k > h$ then $G_k > h$, $h$ being the detection threshold. If $G_{k-1} < h$ and $W_k < G_{k-1}$ then $G_k = G_{k-1} < h$ and $W_k < G_k < h$. There is no way $W_k > h$ without $G_k > h$ and neither $G_k > h | G_{k-1} < h$ without $W_k > h$.

\section{Multivariate detection}

In this section, we consider a multi-sensor network which consists of a collection of indexed sensors ${\cal K}=\left\{1,\ldots,L\right\}$  where each of them observes a realization from the previously discussed model. More specifically, under normal conditions, the distribution which governs the behaviour of each of the sensor is given by $f_0(\cdot|\theta_0)$. At random time and during some random duration, a change could occur which affects a subset of sensors ${\cal K}_c \in \left\{\emptyset\cup {\cal K}\right\}$. The detection problem of a change can be thus formulated using the following binary hypothesis test model:
\begin{align}
    \begin{split}
        {\cal{H}}_0:&p(\x_1,\cdots,\x_n) =  \prod_{l\in{\cal K}}\prod_{k=1}^n f_0(x_{k,l}|\theta_0) \\
          {\cal{H}}_1:&p(\x_1,\cdots,\x_n;\nub,\N) =  \prod_{l\in{\cal K}_c } \prod_{{1\leq k\leq \nu_l}\atop {N_l<k\leq n}} f_0(x_{k,l}|\theta_0)  \prod _{k=\nu_l+1}^{N_l} f_1(x_{k,l}|\theta_1)\\
         & \hspace*{4.5cm} \times\prod_{j\in{\cal K} \setminus {\cal K}_c  }\prod_{k=1}^n f_0(x_{k,j}|\theta_0) 
    \end{split}\label{GeneralMultivariateModel}
\end{align}

\subsection{A brief review of existing procedures}
Solving the problem set by Equation (\ref{GeneralMultivariateModel}) would necessitate to test all combination of change-points for each possible subset of sensors. Some optimisation approaches have been used by \cite{Kurt2018}, or \cite{rovatsos2020a} and \cite{rovatsos2020b}. The major drawback of these approaches is that it becomes rapidly too  computationally expensive and it looses the possibility of a recursive computation. \cite{Golosnoy2009} sees the multivariate CUSUM variable as the norm of the sum of the local test statistics and \cite{Xie2013} takes into account the case where the size of the subset of sensors is roughly known. In some cases, when the number of sensors is very large \cite{Banerjee2015} proposed to only merge binary units to the decision center so that detection is triggered by their number and not by an aggregation of local values. 

Except for \cite{rovatsos2020a} and \cite{rovatsos2020b}, all the developed methods consider that the change is permanent ($N_l = \infty$) for all the affected sensors. \red{The Multi-dimension exponentially weighted moving average (MD-EWMA) algorithm presented by \cite{wu21} takes into account temporary changes but requires the change to be synchronous between the sensors.}

In this work, we intend to remove the limitation $N_l = \infty$ (thus allowing some sensors to stop monitoring the change at some point) without increasing the computational power required for the detection.

Two basic ways to adapt the CUSUM to multivatiate cases is by computing the sum of local variables or by extracting the maximum value among the local CUSUM variables.

The SumCUSUM \cite{Mei2010} associates local CUSUM variables as follows:
\begin{equation}
    T_{SC}(n) = \frac{1}{L} \sum_{l=1}^L W_{l,n}.
    \label{eq11}
\end{equation}
Where $T_{SC}(n)$ being the global SumCUSUM variable; \textit{i.e.} the sum of the local CUSUM variables of the $L$ data stream. $W_{l,n}$ being the CUSUM variable of the $l^{th}$ sensor at time $n$. This variable $T_{SC}(n)$ will be compared to an adapted threshold $h$ to make a decision of a detection when $T_{SC}(n) > h$.

The MaxCUSUM extracts the highest value among the local CUSUM values as shown by Equation (\ref{eq12}): 
\begin{equation}
    T_{MC}(n) = \max_{l\in[1,\cdot\cdot\cdot,L]} (W_{l,n}).
    \label{eq12}
\end{equation}
It appears that SumCUSUM will be relevant to be used when all or almost all of the data streams are affected by the signal while MaxCUSUM will be relevant when one or only a few of the data streams are affected.

\cite{Mei2010} proposed to select ("censor") sensors and compute a partial and optimized SumCUSUM with a low computational cost. It seems to be a very effective way to merge the data for an online use of the method. The SumCUSUM variable is thus transformed as:
 \vspace{-0.2cm}
\begin{equation}
    T_{cSC}(n) = \frac{1}{\sum{\mathds{1}_{W_{l,n} \geq c}}}\sum_l^L W_{l,n}~\mathds{1}_{W_{l,n} \geq c} .
    \label{eq13}
\end{equation}
with $c$ a threshold based on the prior rough knowledge of the value $W_{l,n}$ would take if it were affected by the signal.
Figure \ref{fig6} shows the results of the three methods depending on the proportion of sensors affected by the signal.  The average run-length to false alarm (ARL2FA) of all three methods have been set to 30 000.  
We can see that when 1 or 2 out of 10 sensors are affected, the MaxCUSUM shows lower detection delays. \red{When 3 or more sensors are affected, the SumCUSUM gives a quicker detection than the MaxCUSUM.} We can also infer from Figure \ref{fig6} that the censored SumCUSUM is a good compromise between SumCUSUM and MaxCUSUM. However, \cite{Mei2010} shows that the best choice for $c$ depends on the number of sensors affected. While in some cases the proportion of sensors affected can be roughly predicted, in most cases it is completely unknown. In the case of the example of Figure \ref{fig6} $c$ has been set as 60\% of the global threshold $h$. 
\vspace{-0.4cm}
\begin{figure}[H]
\centering
\hspace{0cm}
    \includegraphics[width=0.9\hsize]{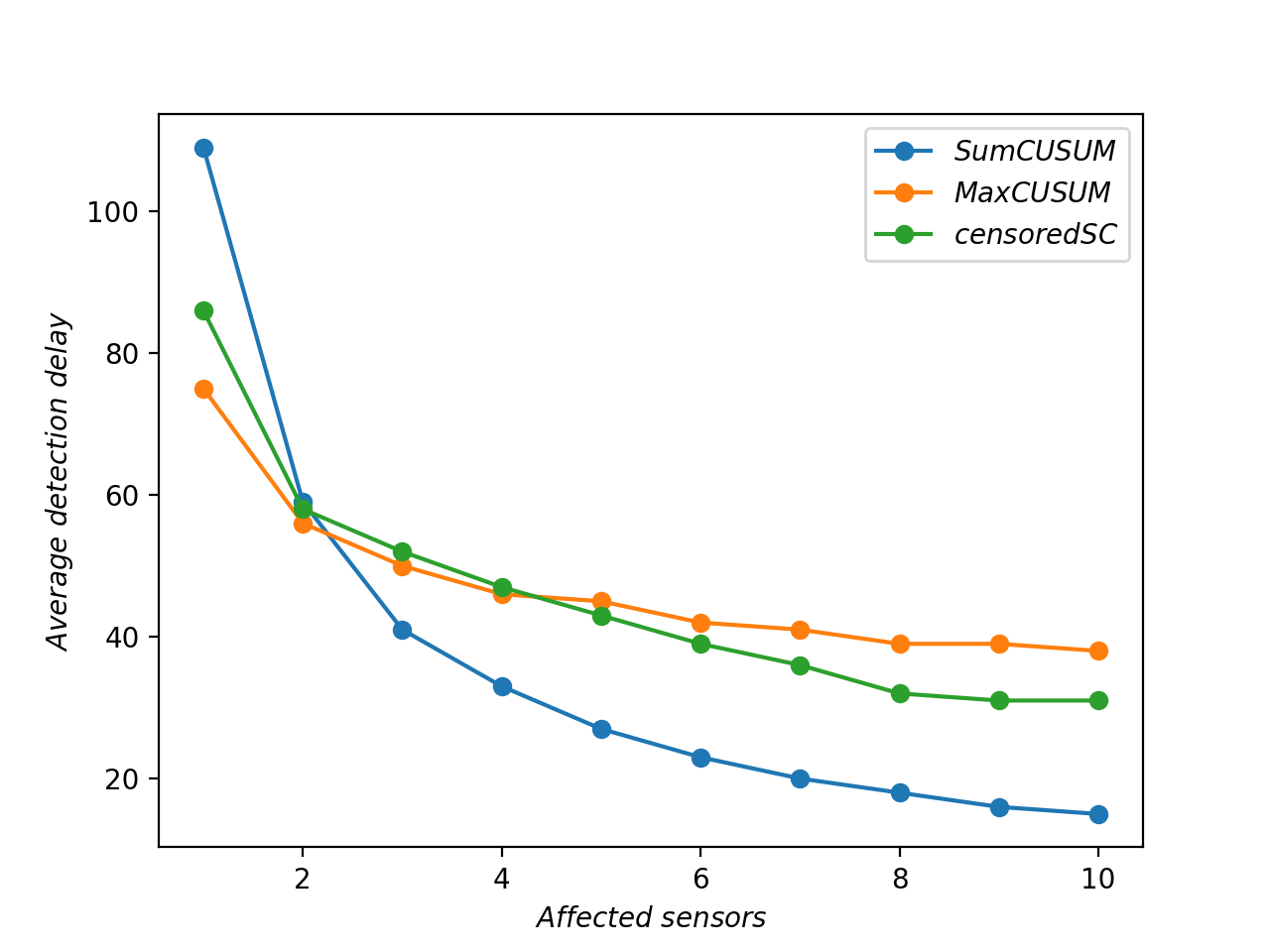}
    \caption{Average detection delay of SumCUSUM, MaxCUSUM and censored SumCUSUM techniques on a change in the mean of a Gaussian distribution with signal to noise ratio of -6dB. Change-point occurs at time sample 1000.}
    \label{fig6}
\end{figure}

\subsection{A novel adaptive censoring technique}

To overcome the limitation of requiring some prior knowledge on the expected values of $W_{l,n}$ in order to carefully choose the absolute threshold $c$ from \cite{Mei2010}, we propose a relative threshold $c_n$, computed for every time sample by:
\begin{equation}
    c_n = \alpha \times \max_{l \in [1, \cdot\cdot\cdot,L]}(W_{l,n})
    \label{eq_relativethreshold}
\end{equation}
with $\alpha$ being a factor so that $0\leq\alpha\leq 1 $.

In both cases the censoring technique is a compromise between the SumCUSUM and the MaxCUSUM. The results given by the two can be retrieved using particular values for $c$ ($0$ and $h$) or $\alpha$ ($0$ and $1$).

In order to assess the difference of behaviour of the two threshold types regarding the number of affected sensors when it is unknown, we conducted an experiment which results are shown in Figure \ref{fig10}. In this experiment, $h$ is set such that an average run-length to false alarm (ARL2FA) of 10,000 is obtained and $c$ and $\alpha$ values are set to be those which give the overall quickest detection for an unknown number of affected sensors between 1 and 20. The results empirically show that the proposed adaptive censoring technique outperforms the classical one. The gap in performance increases with the number of affected sensors.

\begin{figure}[H]
\hspace{0cm}
\centering
    \includegraphics[width=.9\hsize]{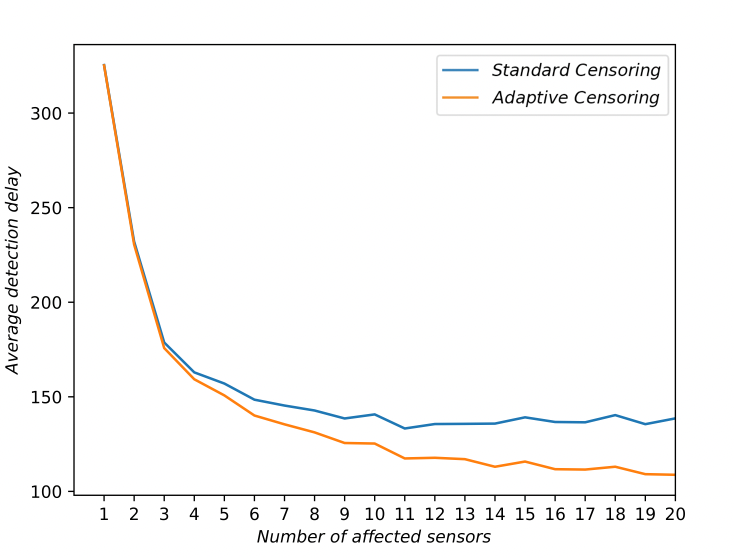}
    \caption{Average detection delay standard and adaptive Censored SumCUSUM at -12dB (with optimized values: $c = 5.8$ and $\alpha = 0.6$). A change of mean in the Gaussian distributions of a subset (in abscissa) of sensors appears at time 1000 while the other sensors keep the centered Gaussian distribution.}
    \label{fig10}
\end{figure}

 Because we cannot know in advance the number of sensors that will be affected in addition of the non-requirement of the knowledge of the expected values of $W_{l,n}$, the relative threshold is consequently more relevant.

Some clues can be pointed out to explain this difference of behaviour by examining the differences between the two methods in some particular cases. 

Case 1: All the local $W_{l,n} > c$ have close values one to another and are relatively far from the $W_{l,n} < c$. In this case both methods will compute the same $T_{cSC}$.

Case 2: The local $W_{l,n} > c$ have very different values. The standard method computes $T_{cSC}$ adding more low values of $W_{l,n}$ and has a lower value of $T_{cSC}$ which slows the detection.

Case 3: All the local $W_{l,n} > c$ have close values one to another and the $W_{l,n} < c$ are also close to the others. In this case, it is the standard method that computes a highest value for $T_{cSC}$, but the case implies that all the values are close to $c$ so detection does not happen in both cases unless the value chosen by $c$ is close to $h$ and in this case we have a behaviour close to the MaxCUSUM.

In all that follows we apply this optimised relative censoring technique to all local statistics (CUSUM, TC-CUSUM, FMA) and keep the SumCUSUM and the MaxCUSUM as benchmarks.

\subsection{Asynchronous monitoring and Temporary-Event-CUSUM on multivariate cases}

In the previous section we have considered that the signal appears simultaneously on all the affected sensors. Indeed, all the local test variables are computed simultaneously and it is from these that we can compute the global variable at time $n$ and make a decision regarding the detection.

In many practical cases the signal can be monitored by the sensors with a delay. Even more, some sensors can cease to be affected by the signal before some others begin to be. Thus, the sensors are not affected at the same time. One could say that we should try to find the best synchronicity of the data streams, meaning the synchronicity which maximises the associated CUSUM variable but this is a combinatorial problem.

By using locally the novel TE-CUSUM test statistic defined in \red{Proposition} \ref{prop_TransSigCUSUM}, the Sum-TE-CUSUM allows us to get the best synchronicity without requiring to test all the combinations, and thus saving a lot of computational resources. 
This time the global test variable becomes:
\begin{equation}
    T_{STEC}(n) = \frac{1}{L}\sum_{l=1}^L\max_{[\nu_l,N_l]}\sum_{k=\nu_l}^{N_l \leq n} \log \frac{f_{1,l}(X_{k,l}|\theta_{1,l})}{f_{0,l}(X_{k,l}|\theta_{0,l})}
    \label{eq14_1}
\end{equation}
$\nu_l$ is the change-point for the $l^{th}$ data stream and $N_l$ is the end of the signal presence in the $l^{th}$ data stream.

As a reminder of Equation (\ref{eq15}), the local variable is:
\begin{equation}
    G_n = \max(G_{n-1},W_n) = \max_{0<k\leq n}(W_k)
\end{equation}

From Equation (\ref{eq14}) and with $(\nu, N)$ of the $l^{th}$ data stream being rewritten $(\nu_l,N_l)$ :
\begin{equation}
    T_{STEC}(n) = \frac{1}{L}\sum_{l=1}^L G_{n,l}
    \label{eq19}
\end{equation}

The censoring technique can be applied to Equation (\ref{eq19}) simply by adding a threshold like in Equation (\ref{eq13}):

\begin{equation}
    T_{cSTEC}(n) = \frac{1}{\sum{\mathds{1}_{G_{n,l} > c_n}}}\sum_{l=1}^L G_{n,l} \times \mathds{1}_{G_{n,l} > c_n}
\end{equation}

Here is an example to illustrate the Sum-TE-CUSUM ($T_{STEC}$). An event is monitored in three data stream but with such a delay that there is no overlap. In Figure \ref{fig8}, we can see the three data streams with and without noise.

\vspace{-.35cm}
\begin{figure}[H]
\hspace{0cm}
\centering
    \includegraphics[width=0.9\hsize,height = 9cm]{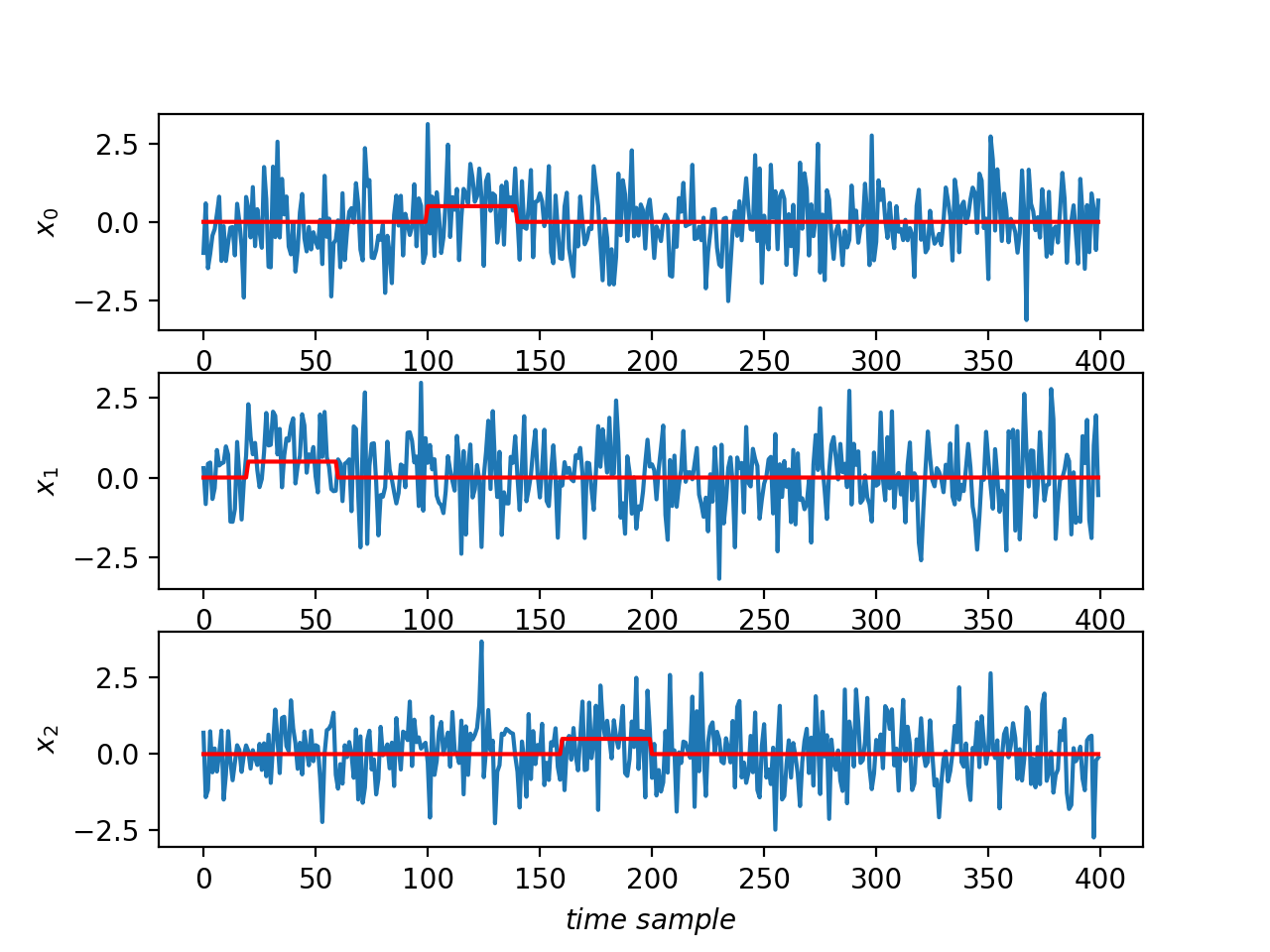}
    \caption{Three data streams monitoring a short event without overlapping (noiseless in red, noise+signal in blue)}
    \label{fig8}
\end{figure} 
\vspace{-0.5cm}
Figure \ref{fig9} shows the evolution of the test variable of the standard SumCUSUM technique and the TE-CUSUM.
\vspace{-0.5cm}
\begin{figure}[H]
\hspace{0cm}
\centering
    \includegraphics[width=0.9\hsize,height = 9cm]{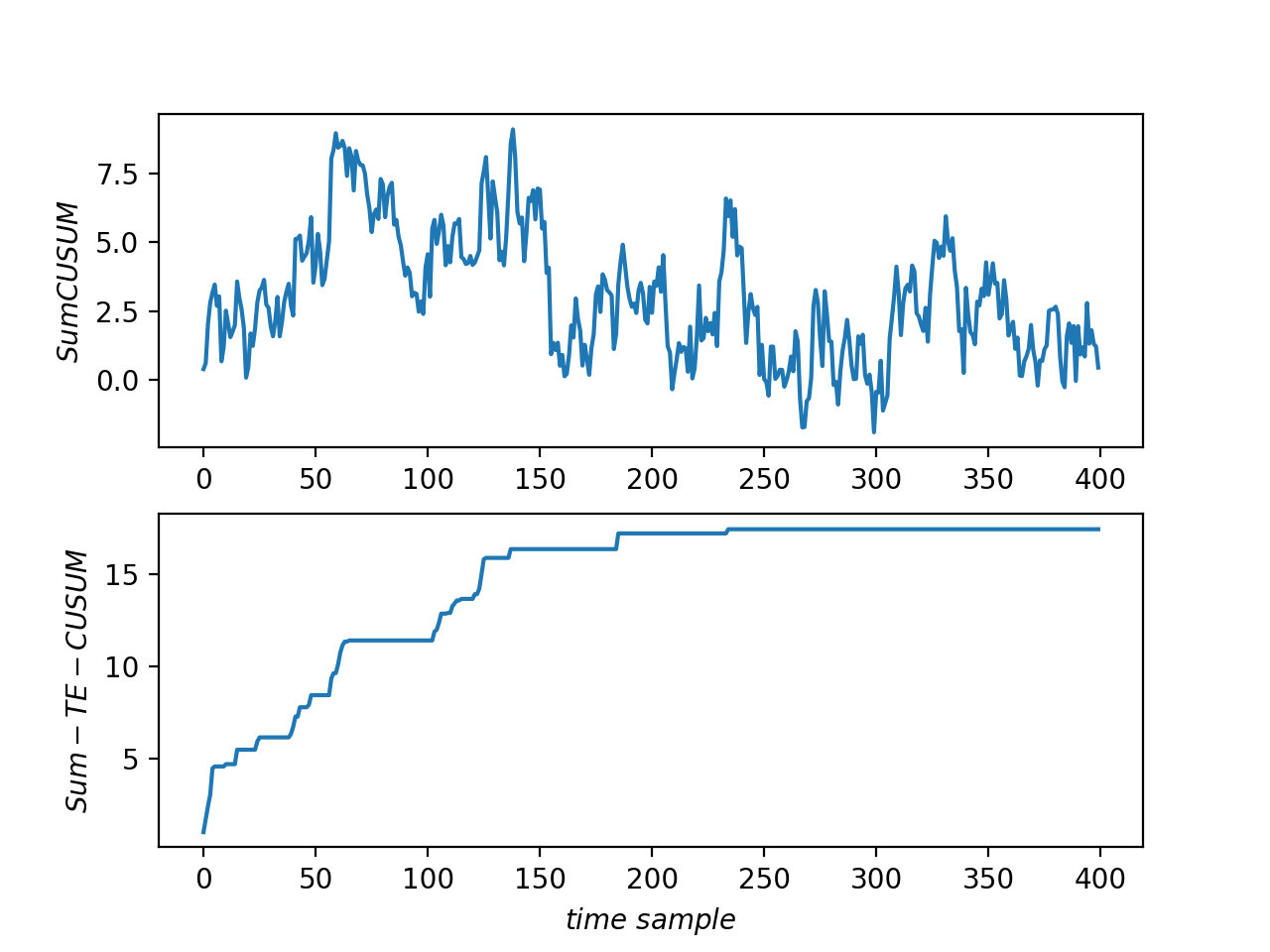}
    \caption{Evolution of test variables on the data streams of Figure \ref{fig8}}
    \label{fig9}
\end{figure} 

We can see in Figure \ref{fig9} that the SumCUSUM decreases between each appearance of the signal while the TE-CUSUM stands by and increases again as soon as the signal appears on another data stream. With the TE-CUSUM, we can detect the presence of the event with a higher threshold. If we set both thresholds in order to have a probability of false alarm of 1\% on this interval and if we make 10,000 runs we obtain a detection rate of 41\% with the SumCUSUM and 83\% with TE-CUSUM. \red{To achieve 1\% of false alarm, detection thresholds are set to 17.5 for the TE-CUSUM and to 13 for the SumCUSUM.}

A pseudo-code, explaining the TE-CUSUM function is available in \ref{appendix3}.

Remark: In order to compare it to the other methods, we can also use the censoring technique to extend the FMA technique to multivariate cases:

\begin{equation}
    T_{cFMA}(n) = \frac{1}{\sum_l \mathds{1}_{Z_{n,l}^w>c_n}}\sum_{l=1}^L Z_{n,l}^w \times  \mathds{1}_{ Z_{n,l}^w > c_n}
\end{equation}

\section{Validation}

In this section we compare the different detection methods presented previously. The studied methods are used to detect a change in the mean amplitude $A$ affecting only a subset of sensors. The measurement noise at each sensor is assumed to be normally distributed with zero mean and standard deviation $\sigma$. This experiment is conducted on several cases with two different signal to noise ratios (SNR) defined as $SNR = 10 \log_{10}(\frac{A^2}{\sigma^2 })$. \red{We ensured to be out of transitional mode in the signal by exposing the first affected sensor after 1500 time samples. In the following plots for simplicity, zero corresponds to the beginning of exposure and therefore not the beginning of the experiment. }%Detections that might have happened before 0 are not counted as these are obviously false alarms}.
Ten sensors are considered among which 3 or 7 monitor the event. \red{The case where 5 sensors monitor the signal is displayed in \ref{sec:sample:appendix}.} The censoring technique will be applied to TE-CUSUM and FMA. For each method the global threshold $h$ is set to have an ARL2FA of 30,000 time samples, and the $\alpha$ parameter from the censoring technique is set (except for SumCUSUM and MaxCUSUM) for the quickest detection for a random number of signal but when all of them are synchronized. \red{In order to set the ARL2FA to 30,000 time samples, the detection thresholds for each method have been set so we have the cumulative probabilities of false alarm presented Figure \ref{fig_calib}}

\begin{figure}
\vspace{-3cm}
\centering
    \includegraphics[width=0.9\hsize,height = 7.5cm]{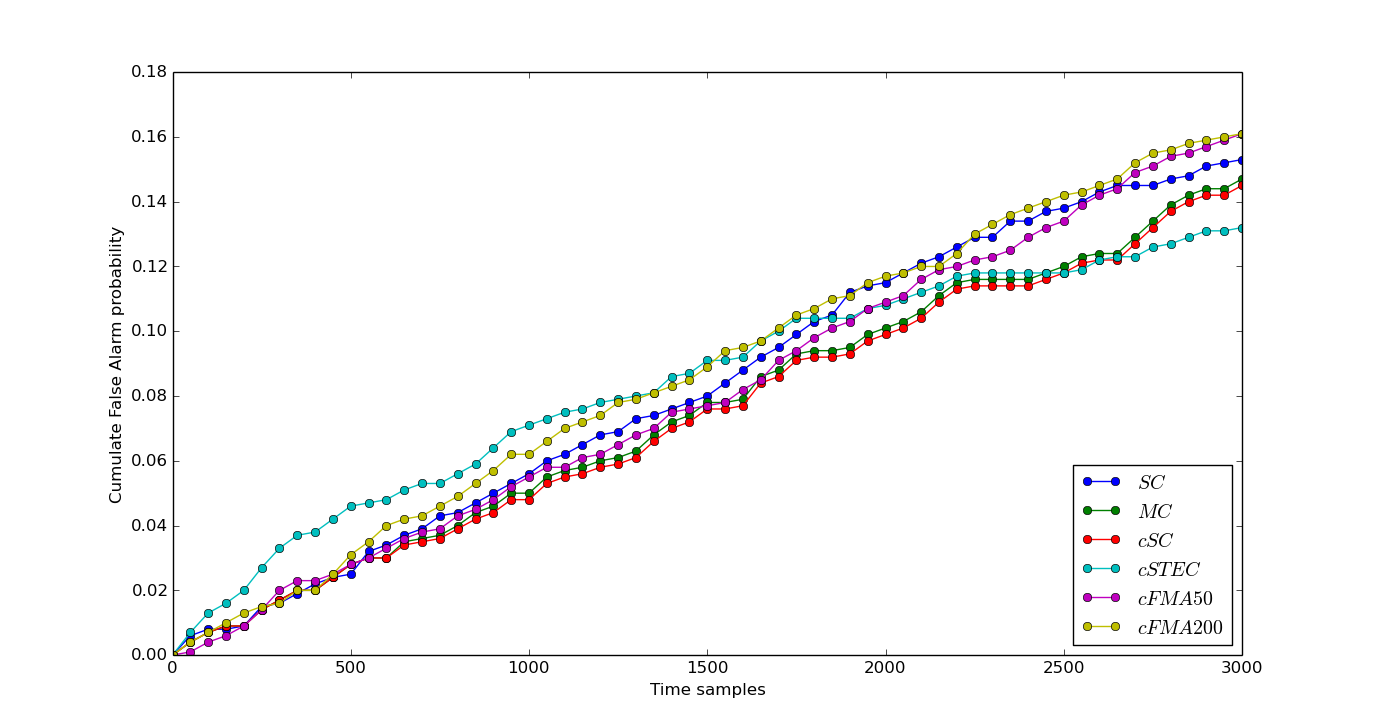}
    \caption{Empirical cumulative false alarm probability for the experiment presented in the validation section. All the methods have been set to have close values of false alarm probability. From this figure we can deduce that the probability of a false alarm to happen in a 1000 time samples window is approximately of 5\%. This means that there is a 5\% chance of false alarm before the beginning of the signal, 5\% chance that a false alarm leads to detection in the fist scenario and 10\% chance that a false alarm leads to detection in the second scenario.(SC=SumCUSUM, MC=MaxCUSUM, cSC = Censored-SumCUSUM, cSTEC = Censored-Sum-Temporary-Event-CUSUM, cFMA50 = Censored-Finite-Moving-Average with a window of 50 time samples and cFMA200 = Censored-Finite-Moving-Average with a window of 200 time samples).}
    \label{fig_calib}
\end{figure} 

Two sizes of window have been chosen for the FMA technique. One window with 50 time samples and another with 200 time samples. Because the duration of the event is supposed to be unknown, this will show a case where the time window is longer than the exposure and a case where the time window shorter than the exposure. In the scenario where the SNR is $-8dB$. In the second scenario, where the SNR is $-14dB$, the duration of the exposure will, fortunately for the FMA200, be of 200 time samples. 

The numerical experiments are divided into four cases: the first case is when all the signals are monitored simultaneously, in the second case there is a drift of half the signal length between each sensor which monitors the signal, in the third case the drift is of a full signal length \lch{(This case is displayed in appendix \ref{sec:sample:appendix})} and in the fourth the drift is of one and a half signal length. This last simulation tends to represent a diffuse event passing through a field of sensors so that these ones do not monitor the event at the same time and only a little part of the sensors can monitor the signal anyways.  

\subsection{First scenario}

In this first scenario, all the studied techniques are set to expect an offset of 0.4 in amplitude and to have an ARL2FA of 30,000 time samples. When a sensor is affected by the signal, its mean value is affected by an offset of 0.4 for a duration of 100 time samples. \red{We expect the detection ratio of the methods to grow rapidly and then become steady shortly after the end of exposure as no signal is added to the system anymore. The growth of each slope will indicate how fast the detection is, and the value reached after the end of exposure indicates the chances of detection given the current experiment.}

\begin{figure}[H]
\begin{subfigure}{.5\textwidth}
\hspace{0cm}
    \includegraphics[width=1\hsize]{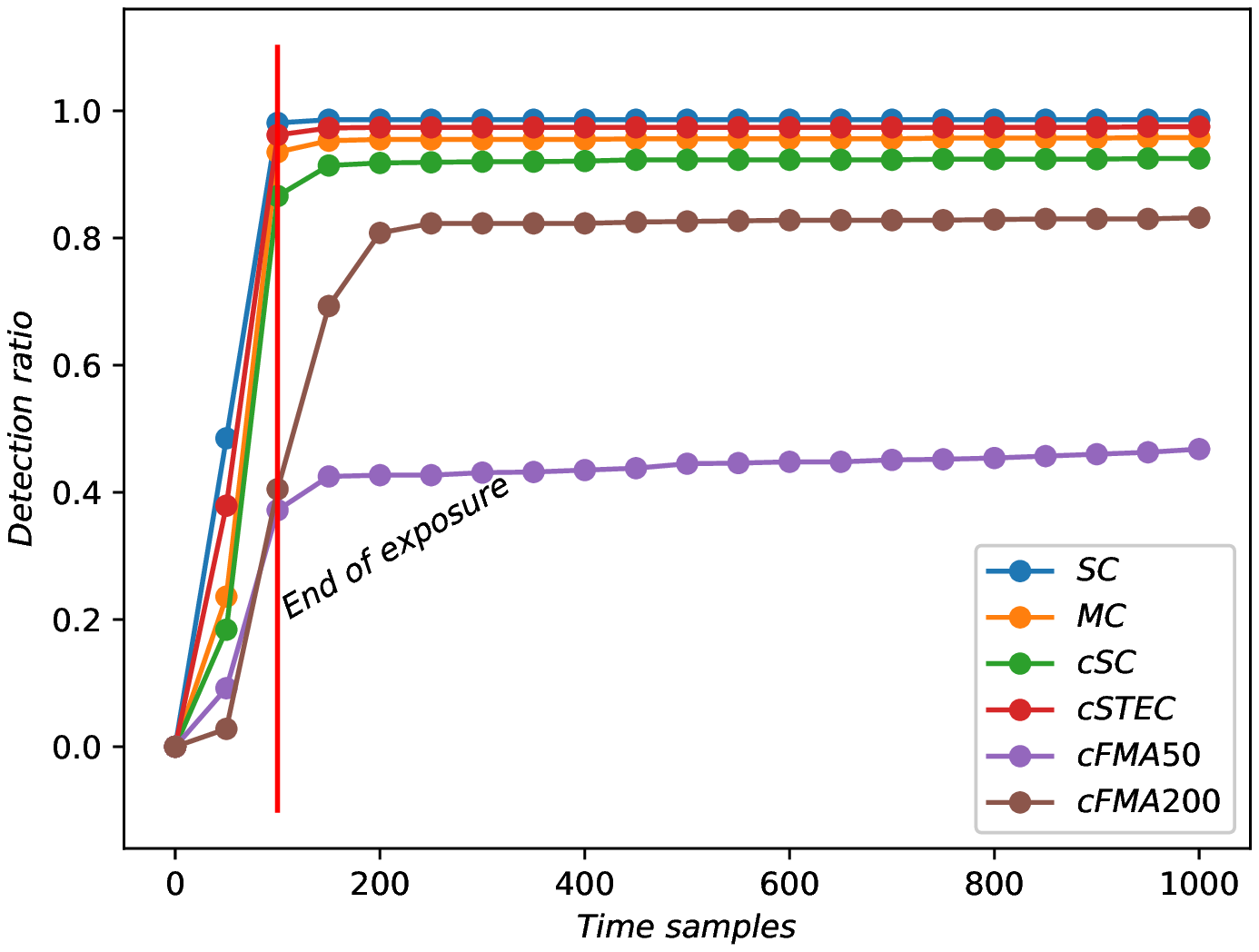}
    \caption{Results when 3 sensors out of 10 are affected}
    \label{fig_Sync3_04}

\end{subfigure}%
\begin{subfigure}{.5\textwidth}

\hspace{0cm}
    \includegraphics[width=1\hsize]{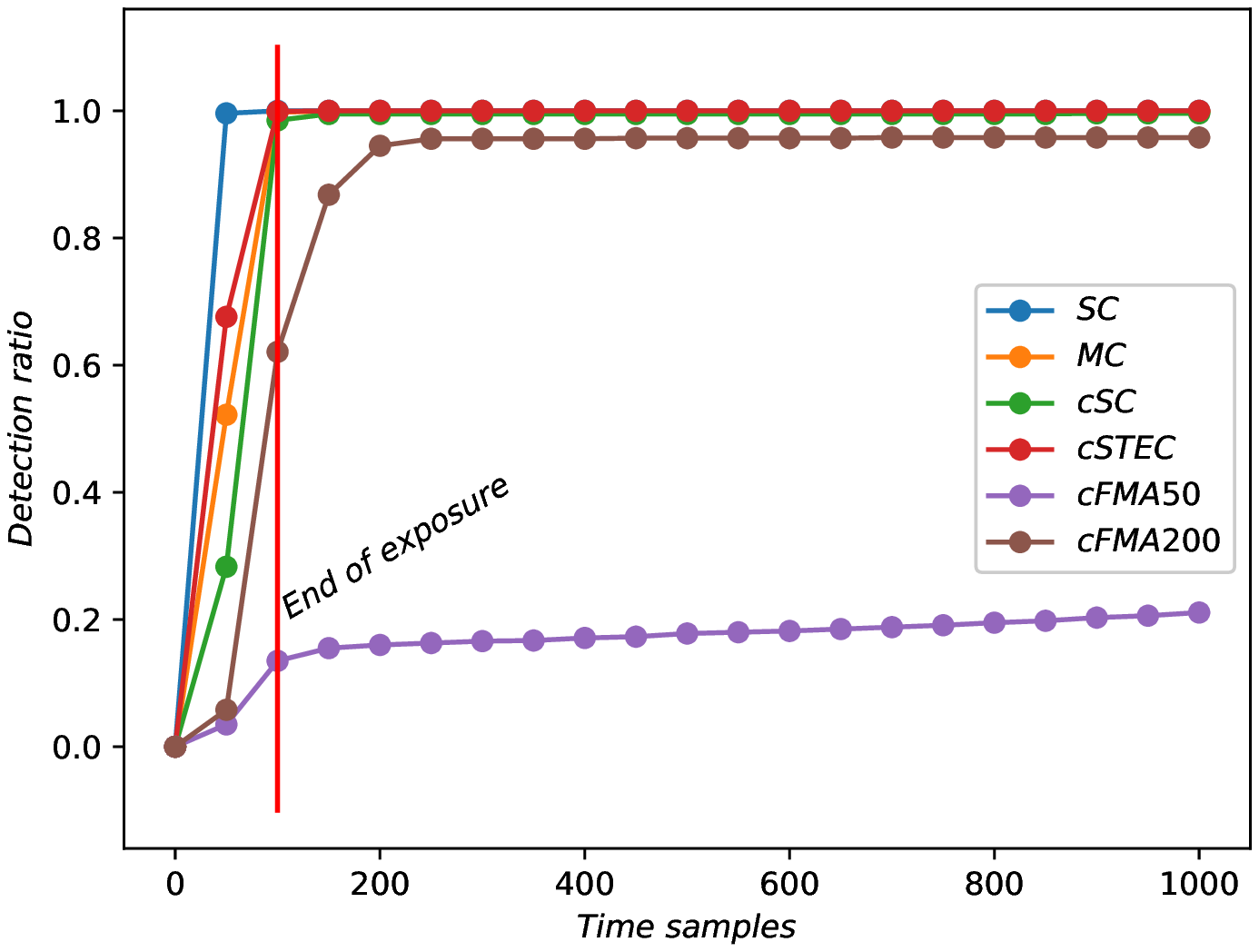}
    \caption{Results when 7 sensors out of 10 are affected}
    \label{fig_Sync7_04}
\end{subfigure}
\caption{Cumulative detection rate as a function of time of each of the presented method when streams are synchronized with a signal to noise ratio of -8dB. The end of exposure marks the time until at least one sensor is monitoring the change in amplitude. The exposure time is of 100 time samples for the affected sensors.(Legend is detailed in Figure \ref{fig_calib}).}
\end{figure}

From Figures \ref{fig_Sync3_04} and \ref{fig_Sync7_04}, we can see that when all the exposures are synchronized, the SumCUSUM technique is the one that gives the best results. However, the censored-SUM-TE-CUSUM and the Max-CUSUM give rather good results in this case. When data streams stop to monitor the event, at the time marked "end of exposure", the detection ratio is over 80\% for the four best methods in the case where 3 data sensors are exposed and over 90\% when 7 sensors are exposed. In the case 7 sensors are exposed the censored TE-CUSUM as well as the SumCUSUM give a result of almost 100\% of detection at the end of exposure. The FMA technique seems to give slower detection and does not manage to reach the detection rate of the other techniques.

\begin{figure}[H]

\begin{subfigure}{.5\textwidth}
\hspace{0cm}
    \includegraphics[width=1\hsize]{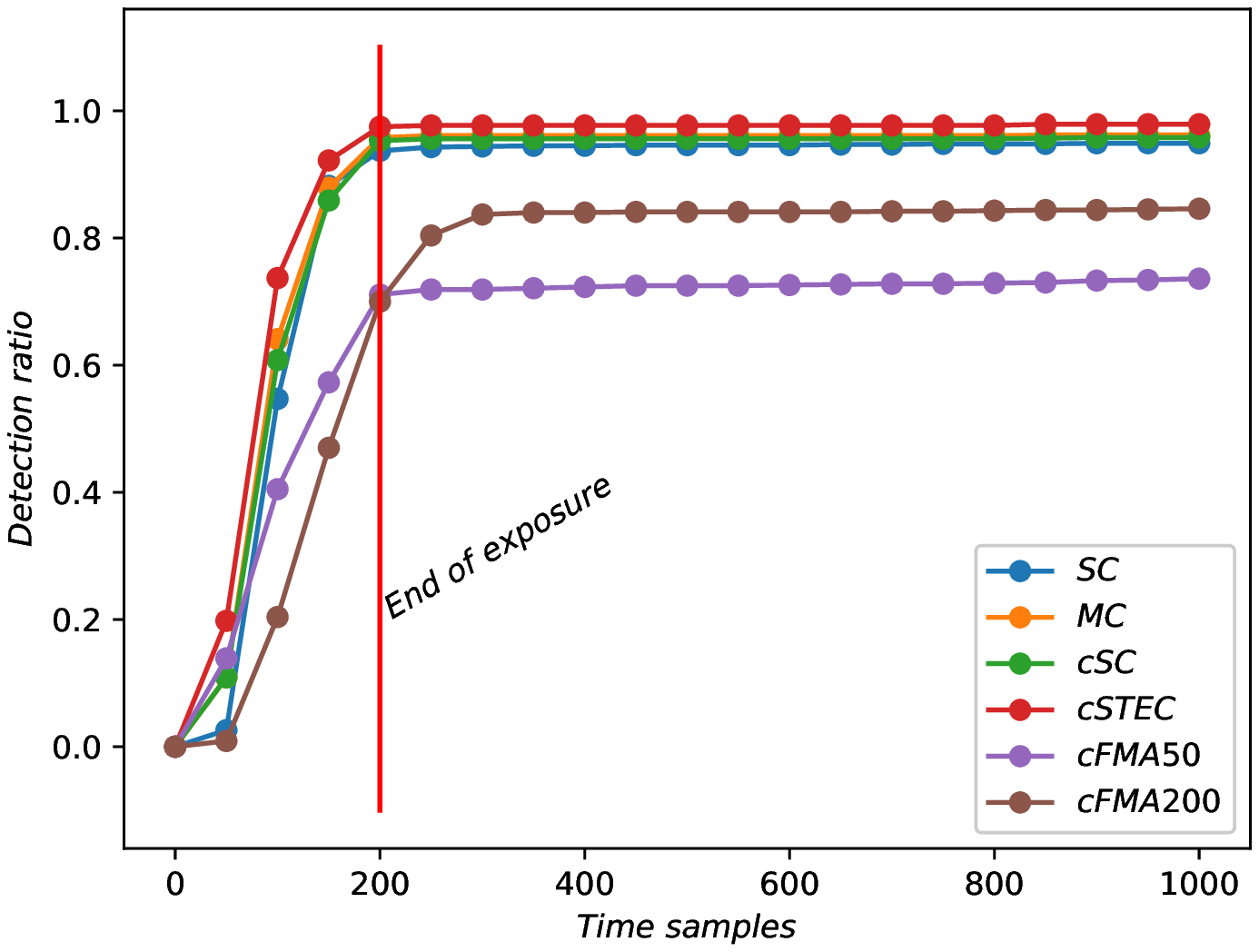}
    \caption{Results when 3 sensors out of 10 are affected}
    \label{fig_OSync3_04}
\end{subfigure}%
\begin{subfigure}{.5\textwidth}
\hspace{0cm}
    \includegraphics[width=1\hsize]{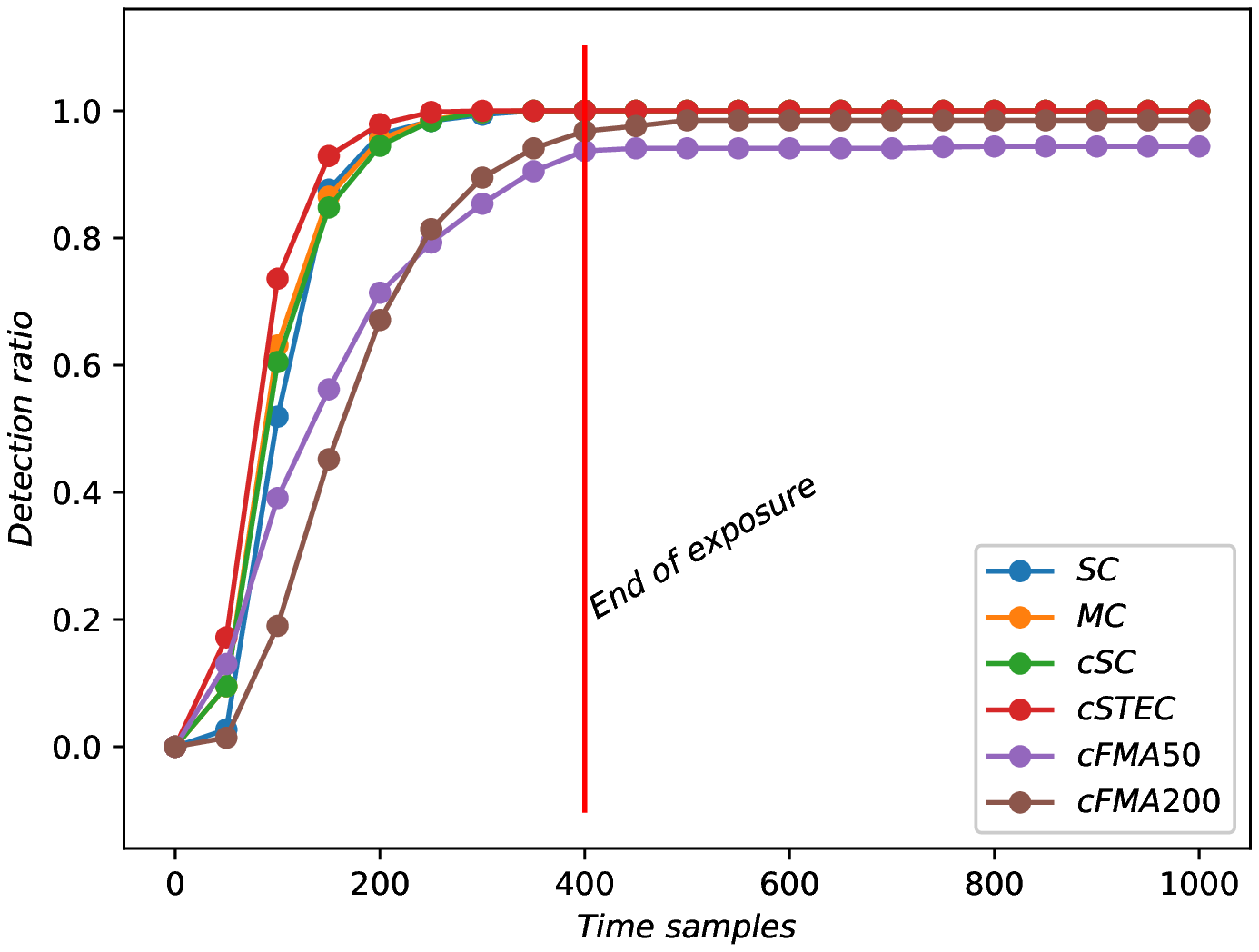}
    \caption{Results when 7 sensors out of 10 are affected}
    \label{fig_OSync7_04}
    \end{subfigure}%
    \caption{Cumulative detection rate as a function of time of each of the presented method when streams are slightly out-of-sync with a signal to noise ratio of -8dB. The end of exposure marks the time until at least one sensor is monitoring the change in amplitude. Here the delay between the beginning of the exposure of a sensor and the next is of 50 time samples while the exposure time by sensor is of 100 time samples.(Legend is detailed in Figure \ref{fig_calib}).}
\end{figure}

The results presented on Figures \ref{fig_OSync3_04} and \ref{fig_OSync7_04} are obtained in the same conditions except this time, the start of monitoring of every data stream is delayed by 50 time samples from the previous one.
Here, the censored TE-CUSUM shows how it manages to give similar results in more difficult conditions. Indeed, at the end of the exposure in figure \ref{fig_OSync3_04} the total energy transmitted by the event to the system is the same than at the end of exposure in figure \ref{fig_Sync3_04}. The censored TE-CUSUM gives in both cases a 90\% detection rate at the "end of exposure" where the SumCUSUM technique falls from about 95\% when signals are synchronized to a little less than 90\% when they are slightly out-of-sync. On these figures we can also see that the longer the overall exposure is the better the FMA results seem to be.

\begin{figure}[H]

\begin{subfigure}{.5\textwidth}
\hspace{0cm}
    \includegraphics[width=1\hsize]{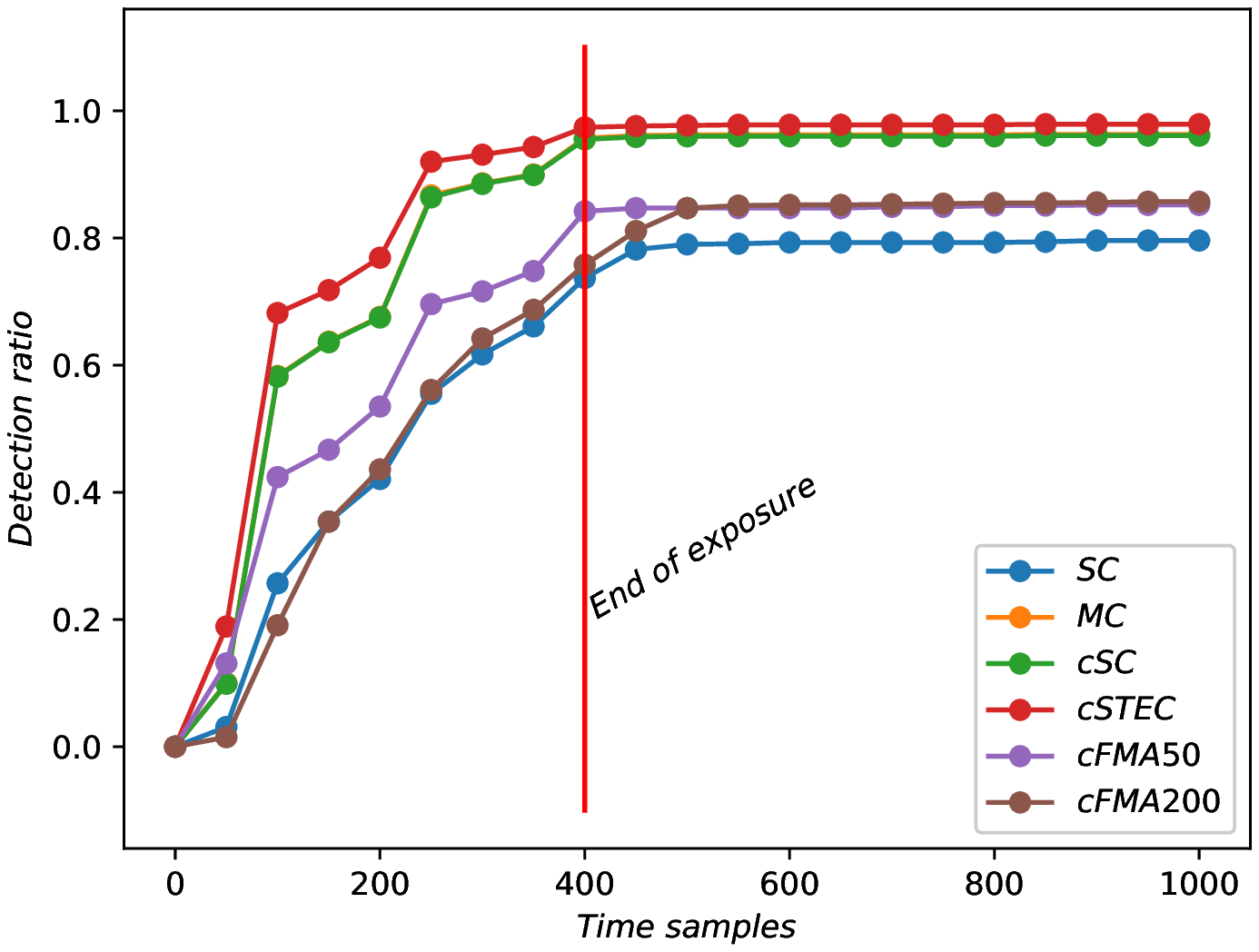}
    \caption{Results when 3 sensors out of 10 are affected}
    \label{fig_OOOSync3_04}
\end{subfigure}%
\begin{subfigure}{.5\textwidth}
\hspace{0cm}
    \includegraphics[width=1\hsize]{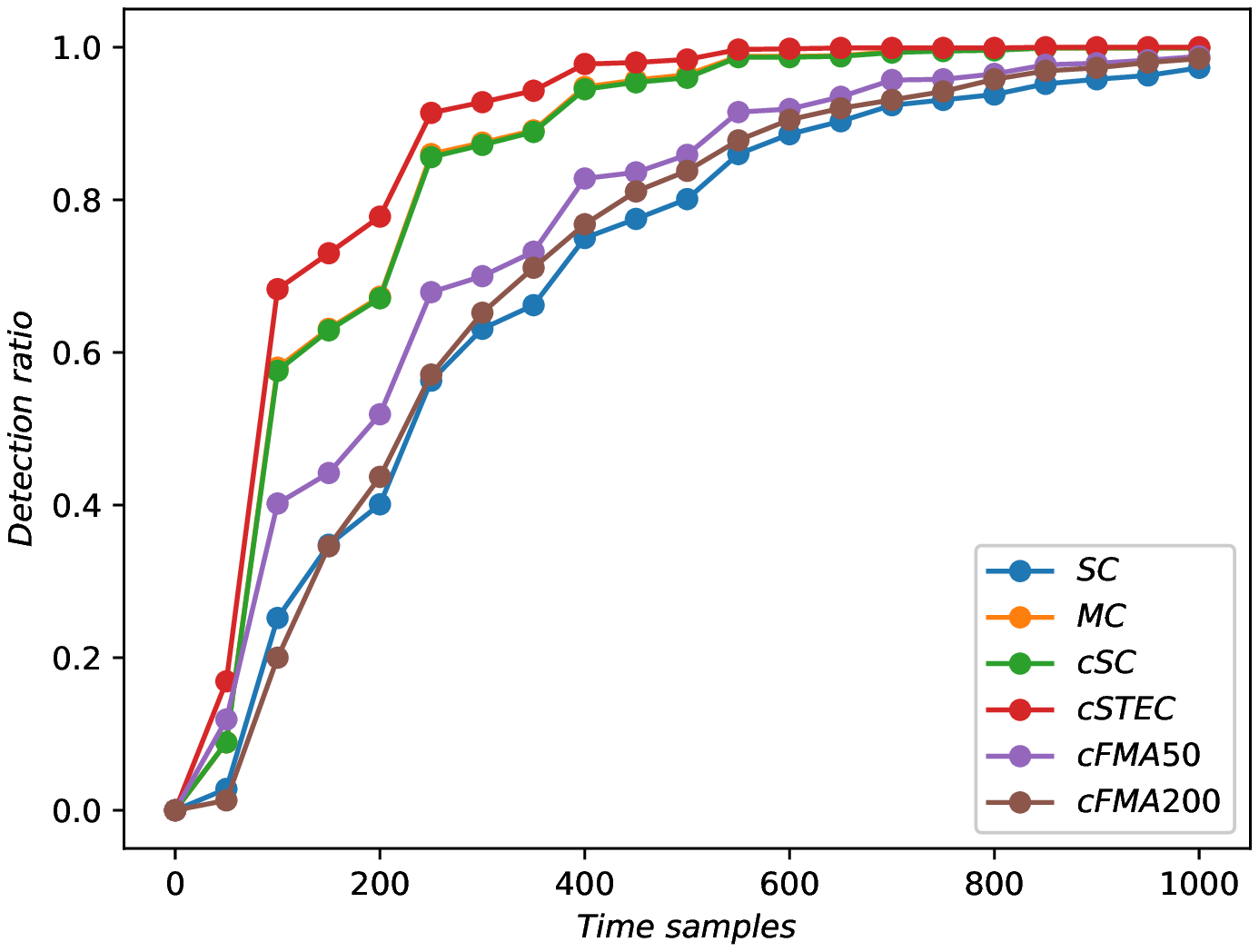}
    \caption{Results when 7 sensors out of 10 are affected }
    \label{fig_OOOSync7_04}
    \end{subfigure}%
\caption{Cumulative detection rate as a function of time of each of the presented method when streams are totally out-of-sync with a signal to noise ratio of -8dB. The end of exposure marks the time until at least one sensor is monitoring the change in amplitude. In (b) the end of exposure happen after the end of monitoring. Here the delay between the beginning of the exposure of a sensor and the next is of 150 time samples while the exposure time by sensor is of 100 time samples. (Legend is detailed in Figure \ref{fig_calib}).}
\end{figure}

In the case of Figures \ref{fig_OOOSync3_04} and \ref{fig_OOOSync7_04} there is a gap of 50 time samples where no data stream monitors the signal between the exposure of each data stream. One can see that the detection rate tends to stay still between the exposure of each data stream. In figure \ref{fig_OOOSync7_04} the end of exposure happens at time sample 1100 and is not marked on the figure.
In these two last cases we can see that the censored FMA technique, can give good results when the exposure lasts. It is notable that in all the non synchronized cases, censored TE-CUSUM gives the best early detection and the best detection rate.  This time, the SumCUSUM, because it only considers instantaneous CUSUM variables, has a very low detection rate compared to the other techniques. The difference of performances between the SumCUSUM  and its censored version clearly shows the benefit of using our proposed censoring technique when signals are not synchronised. A similar behaviour has been observed with the censoring technique applied to our TE-CUSUM.

Results with more cases are presented in \ref{sec:sample:appendix}.

\subsection{Second scenario}

In this second scenario, we run the same tests, but with a lower SNR. The amplitude of the signal expected by the methods is still of 0.4, but the real signal has an amplitude of only 0.2. Again, detection thresholds of all methods are set to have an ARL2FA of 30,000 time samples.  When a sensor is affected by the signal, its mean value is affected by an offset of 0.2 for a duration of 200 time samples.

\begin{figure}[H]

\begin{subfigure}{.5\textwidth}
\hspace{0cm}
    \includegraphics[width=1\hsize]{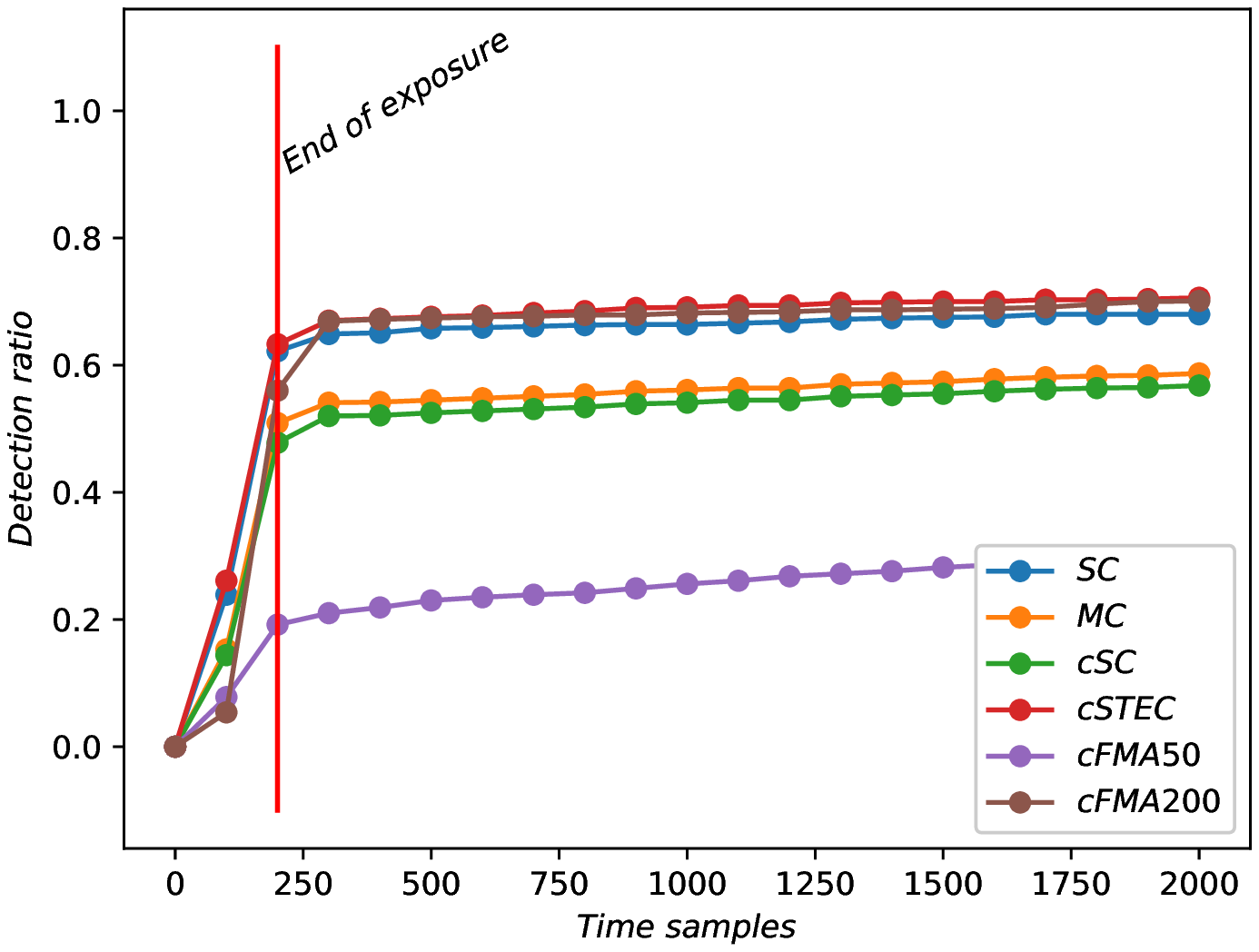}
    \caption{Results when 3 sensors out of 10 are affected}
    \label{fig_Sync3_02}
\end{subfigure}%
\begin{subfigure}{.5\textwidth}
\hspace{0cm}
    \includegraphics[width=1\hsize]{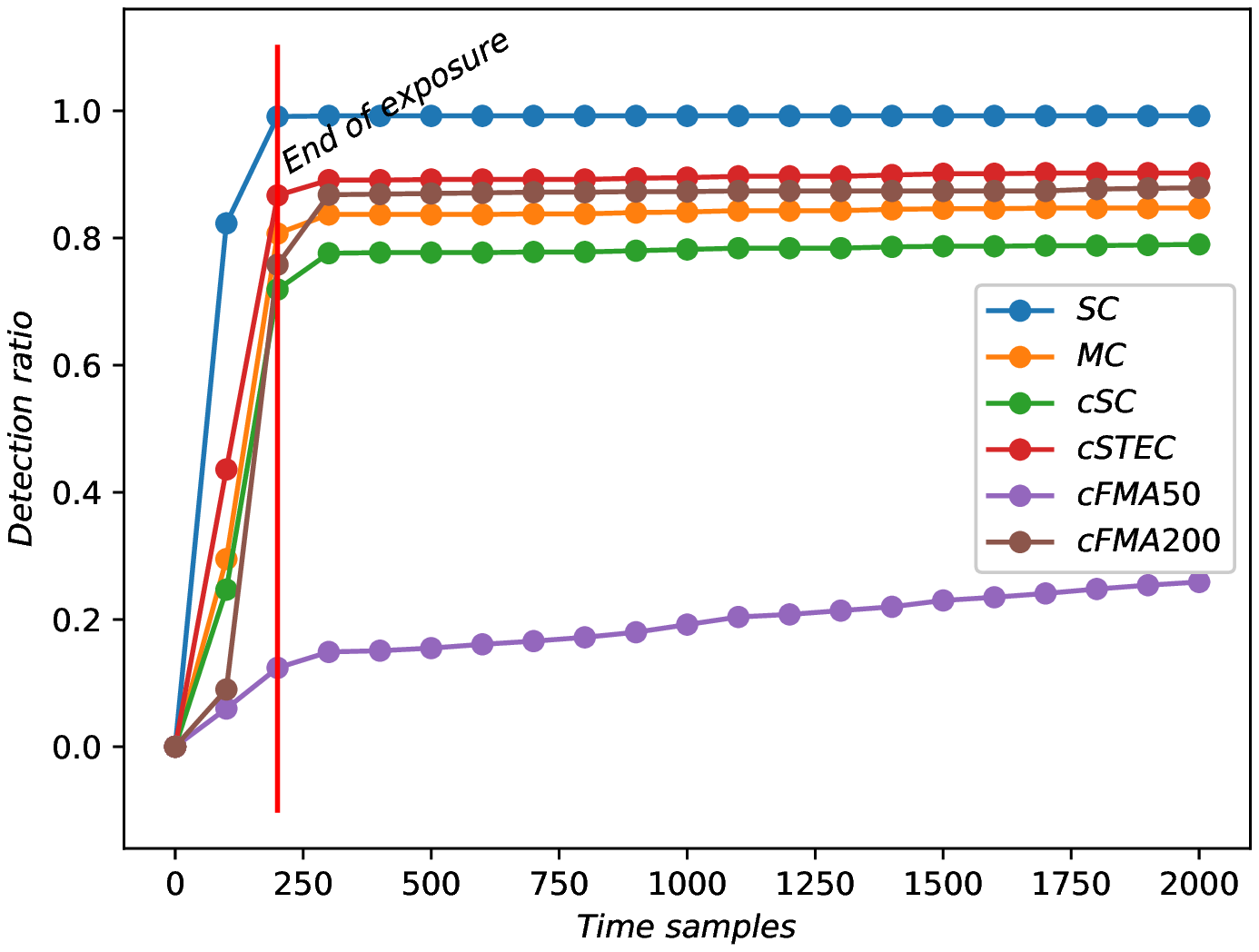}
    \caption{Results when 7 sensors out of 10 are affected}
    \label{fig_Sync7_02}
    \end{subfigure}%
    \caption{Cumulative detection rate as a function of time of each of the presented method when streams are synchronized with a signal to noise ratio of -14dB. The end of exposure marks the time until at least one sensor is monitoring the change in amplitude. The exposure time is of 100 time samples for the affected sensors. (Legend is detailed in Figure \ref{fig_calib}).}
\end{figure}
\vspace{-0.5cm}
Figures \ref{fig_Sync3_02} and \ref{fig_Sync7_02} shows that, this time, nearly all the methods have the utmost difficulties to give high detection rates. The exposure of more data streams in the case of Figure \ref{fig_Sync7_02} improves the detection rates of the methods, but SumCUSUM is the only one which is fully satisfying. 

\vspace{-0.5cm}
\begin{figure}[H]

\begin{subfigure}{.5\textwidth}
\hspace{0cm}
    \includegraphics[width=1\hsize]{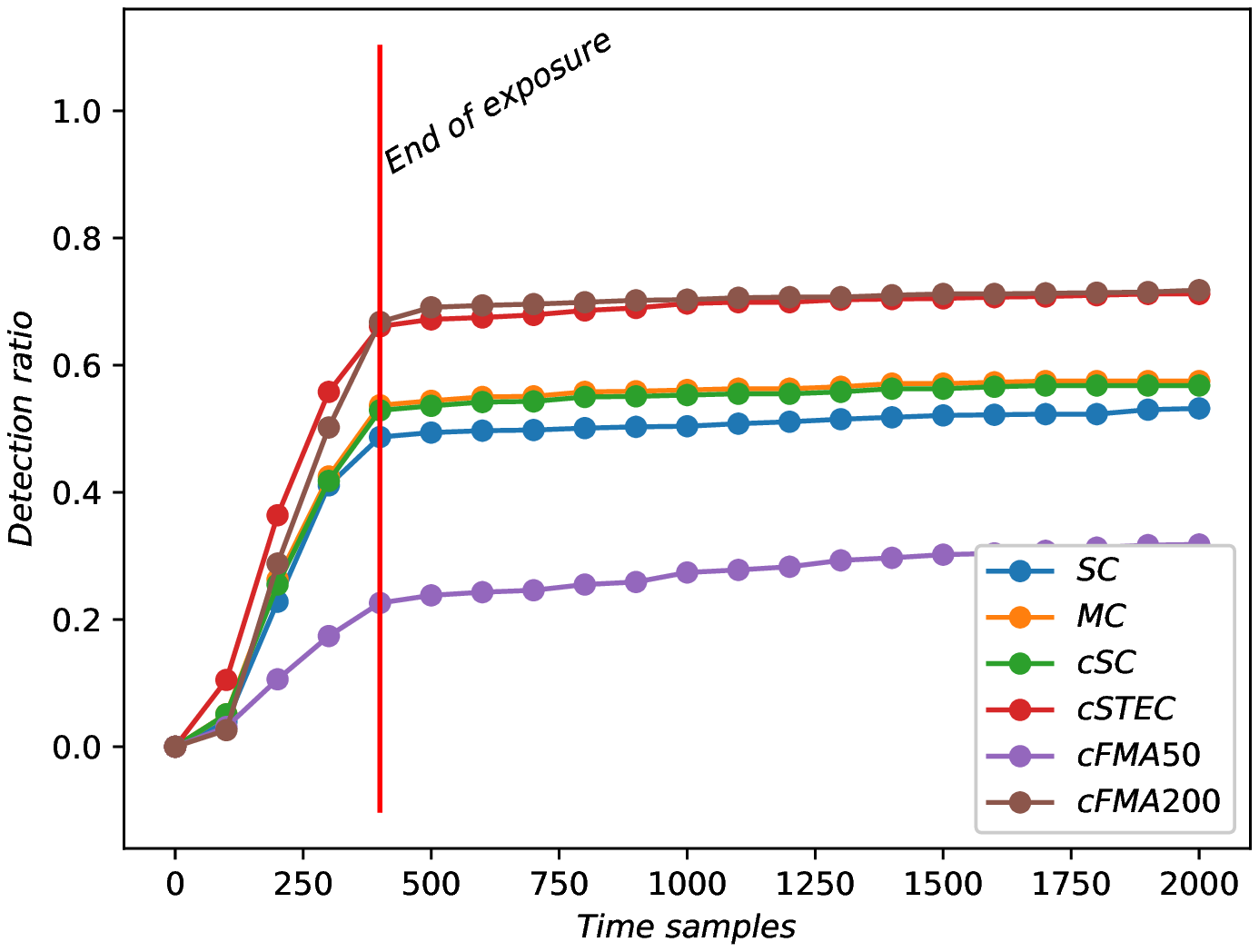}
    \caption{Results when 3 sensors out of 10 are affected}
    \label{fig_OSync3_02}
\end{subfigure}%
\begin{subfigure}{.5\textwidth}
\hspace{0cm}
    \includegraphics[width=1\hsize]{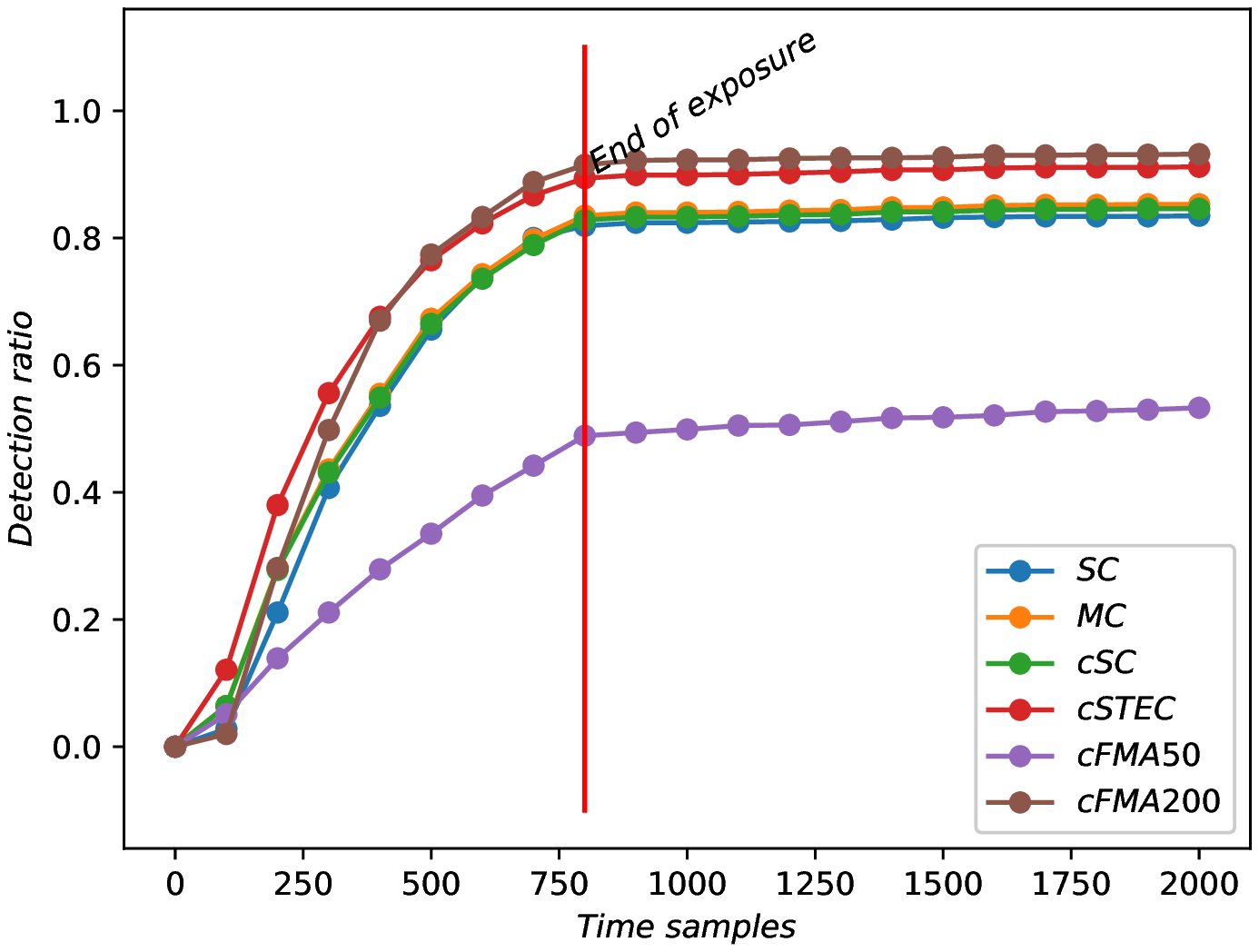}
    \caption{Results when 7 sensors out of 10 are affected}
    \label{fig_OSync7_02}
    \end{subfigure}%
    \caption{Cumulative detection rate as a function of time of each of the presented method when streams are slightly out-of-sync with a signal to noise ratio of -14dB. The end of exposure marks the time until at least one sensor is monitoring the change in amplitude. Here the delay between the beginning of the exposure of a sensor and the next is of 50 time samples while the exposure time by sensor is of 100 time samples. (Legend is detailed in Figure \ref{fig_calib}).}
\end{figure}

Figures \ref{fig_OSync3_02} and \ref{fig_OSync7_02} show that when signals are slightly out-of-sync, the censored FMA (with a time window of 200) and censored TE-CUSUM give similar results. If is these results are not fully satisfying, it is important to remember that the SNR is very low. In terms of power, the signal to noise ratio is of -14dB on a data stream when the signal is present. 

\begin{figure}[H]

\begin{subfigure}{.5\textwidth}
\hspace{0cm}
    \includegraphics[width=1\hsize]{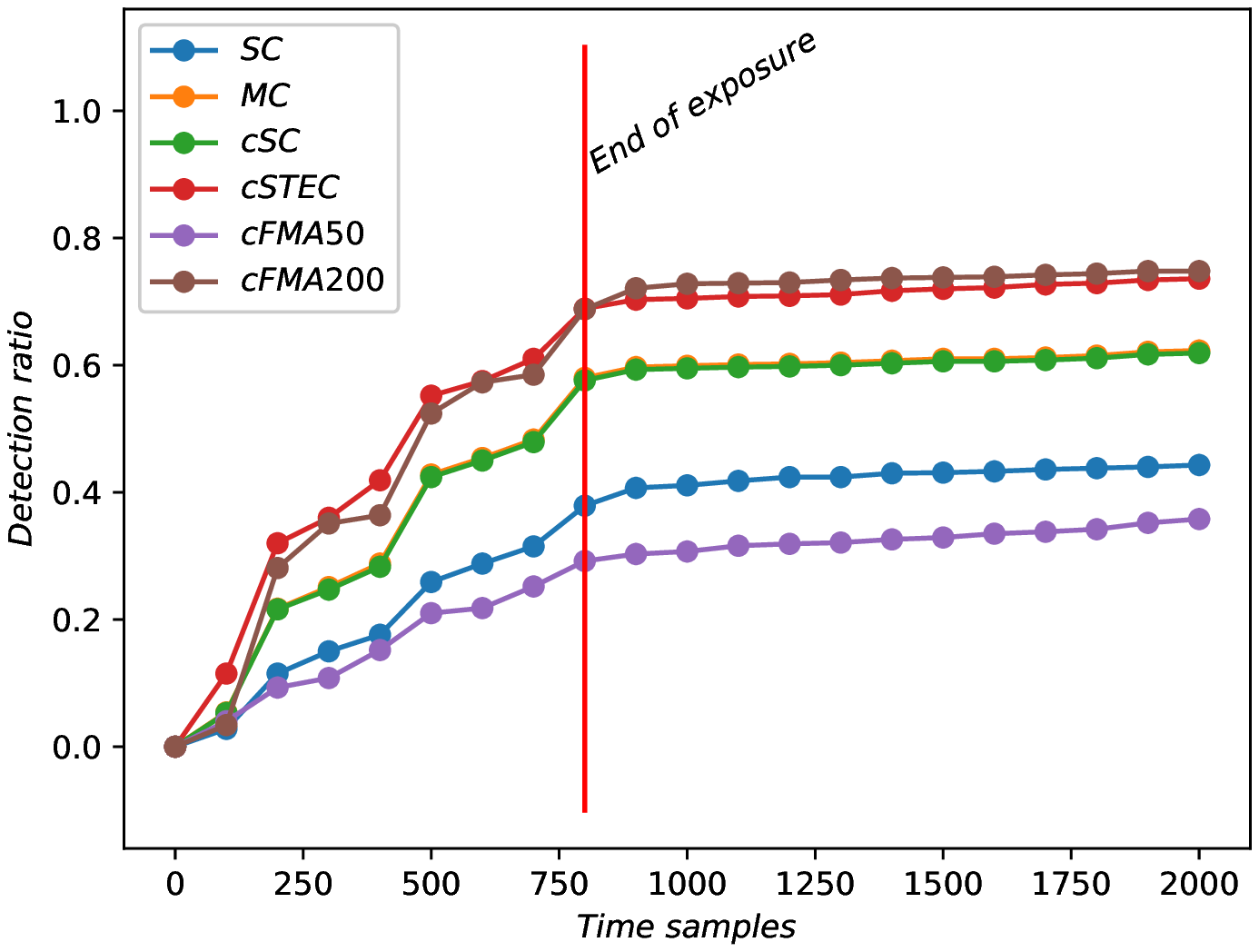}
    \caption{Results when 3 sensors out of 10 are affected}
    \label{fig_OOOSync3_02}
\end{subfigure}%
\begin{subfigure}{.5\textwidth}
\hspace{0cm}
    \includegraphics[width=1\hsize]{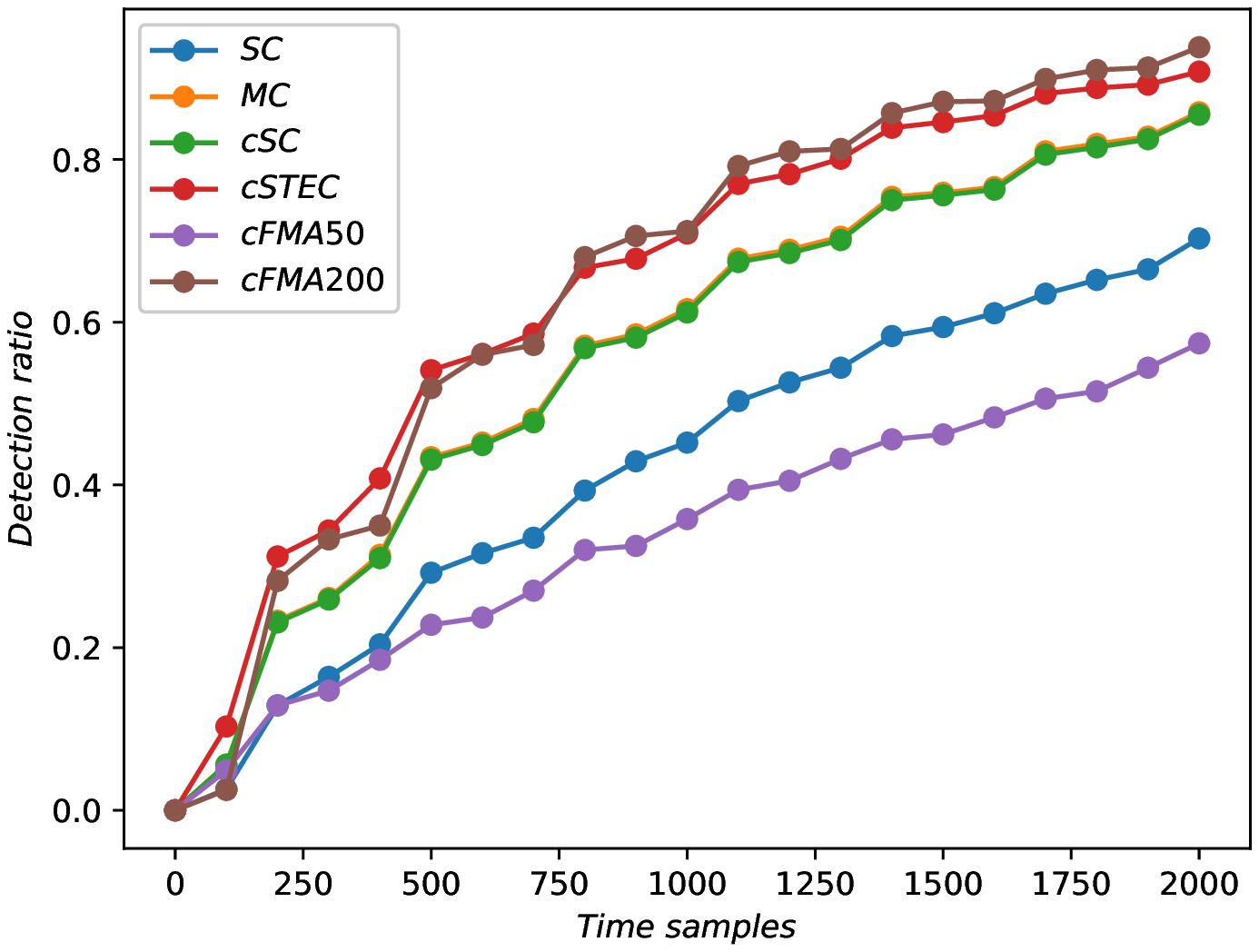}
    \caption{Results when 7 sensors out of 10 are affected}
    \label{fig_OOOSync7_02}
\end{subfigure}%
\caption{Cumulative detection rate as a function of time of each of the presented method when streams are totally out-of-sync with a signal to noise ratio of -14dB. The end of exposure marks the time until at least one sensor is monitoring the change in amplitude. In (b) the end of exposure happen after the end of monitoring. Here the delay between the beginning of the exposure of a sensor and the next is of 150 time samples while the exposure time by sensor is of 100 time samples. (Legend is detailed in Figure \ref{fig_calib}).}
\end{figure}

In this last example, Figures \ref{fig_OOOSync3_02} and \ref{fig_OOOSync7_02} show that the censored TE-CUSUM and the censored FMA can reach the same detection rate as when signals are synchronized or slightly out-of-sync, even if these detection rate are reached later than when signals are synchronised. The SumCUSUM gives very good results when most sensors monitor the event and when these are synchronized. However, when one of these condition is not met, the SumCUSUM performance decreases very fast.  

Results with subsets of affected sensors and one additional asynchronous case are presented in \ref{sec:sample:appendix}.

\section{Conclusion}

In this paper, we have addressed the detection problem of an event which only appears on a subset of \lch{sensors}, and such that these appearances can be delayed one to another so that they can be perceived by the system as if the data streams monitoring the same event were out-of-sync. 

Existing methods have already explored the fact that an event can be monitored by only a portion of the \lch{sensors}, as well as they can deal with the fact that the change point does not occur at the same time on every data stream. But what standard CUSUM methods lack to consider is the fact that a sensor can in some cases ceases to monitor the event while another one does.

The method we propose takes in consideration all the cases of delays between the change point on the different data streams as well as the fact that some of them can cease to monitor the event before the end of the system exposure. We have shown that if the system is composed of only one data stream, the TE-CUSUM is equivalent to a standard CUSUM procedure. 

We have also shown that the censored TE-CUSUM, beside the cases where all signals are synchronized, gives the best results, even if those can be similar to the FMAs at very low SNR when the FMA window is adapted to the signal length. It is important however to note that the FMA was not originally designed to be used on multivariate out-of-sync cases. Also, a big advantage of the censored TE-CUSUM, is that it keeps the recursive computation of the CUSUM. Indeed, it only adds a comparison of the SumCUSUM variable to the last maximum to the SumCUSUM technique where the FMA computes the likelihood ratio on a signal portion which can be rather long, which means it needs to store many values (\ref{appendix2}) to compute the test variable. The FMA also requires to use the window length which can be sensitive to the length of the signal that is expected. This sensitive parameter of the window length for the FMA gives the TE-CUSUM the advantage of being easier to tune when the length of exposure is not known.

In this paper we have introduced the new TE-CUSUM method which provides a light and simple detection technique which covers a greater range of cases by adding the possibility for the event to detect to be temporarily and not simultaneously monitored by the different data streams while still giving rather good results in standard cases when data streams are synchronised. The proposed procedure can have many practical applications, e.g. when a network of sensors is monitoring a localised event passing through. It can be, for instance a plume travelling into the air containing a chemical compound one wishes to detect or a furtive object passing through several radar monitored areas one after the other.

\newpage
\appendix
\section{TE-CUSUM pseudo-code}
\label{appendix3}

\begin{algorithmic}
\State{$Detection \gets False$}
\State {$t \gets 0$}
\For{$l \in L$}

    \State {$G_{l,0} \gets 0$}
    
    \State {$W_{l,0} \gets 0$}
    
\EndFor
\While{Detection = False}
    \State {$t \gets t+1$}
    
    \For{$l \in L$}
        \State $X_{l,t} \gets value~of~sensor~l~at~time~t$ 
        \State $L_{l,t} \gets log-likelihood~for~sensor~l$
        \If{$W_{l,t-1} \geq 0$}
            \State{$W_{l,t} \gets W_{l,t-1} + L_{l,t}$}
        \Else
            \State{$W_{l,t} \gets L_{l,t}$}
        \EndIf
        \If {$W_{l,t}> G_{l,t-1}$}
            \State{$G_{l,t} \gets W_{l,t}$}
        \Else
            \State{$G_{l,t} \gets G_{l,t-1}$}
        \EndIf
    \EndFor
    \State {$TECUSUM \gets 0$}
    \State {$s \gets 0$}
    \For{$l \in L$}
        \If{$G_{l,t} > \alpha \times \underset{l}{\max} G_{l,t}$}
        
            \State {$ TECUSUM \gets  TECUSUM + G_{l,t}$}
            \State {$s \gets s+1$}
        \EndIf
    \EndFor
    \If{$\frac{TECUSUM}{L_t} \geq threshold$}
        \State{$Detection \gets True$}
        
    \EndIf
    
\EndWhile
\end{algorithmic}
\newpage

%\appendix

\section{Experimental results}
\label{sec:sample:appendix}
\begin{table}[H]
\begin{tabular}{c c c}
    \includegraphics[width=0.33\hsize]{fig/Sync3_04.eps} & \includegraphics[width=0.33\hsize]{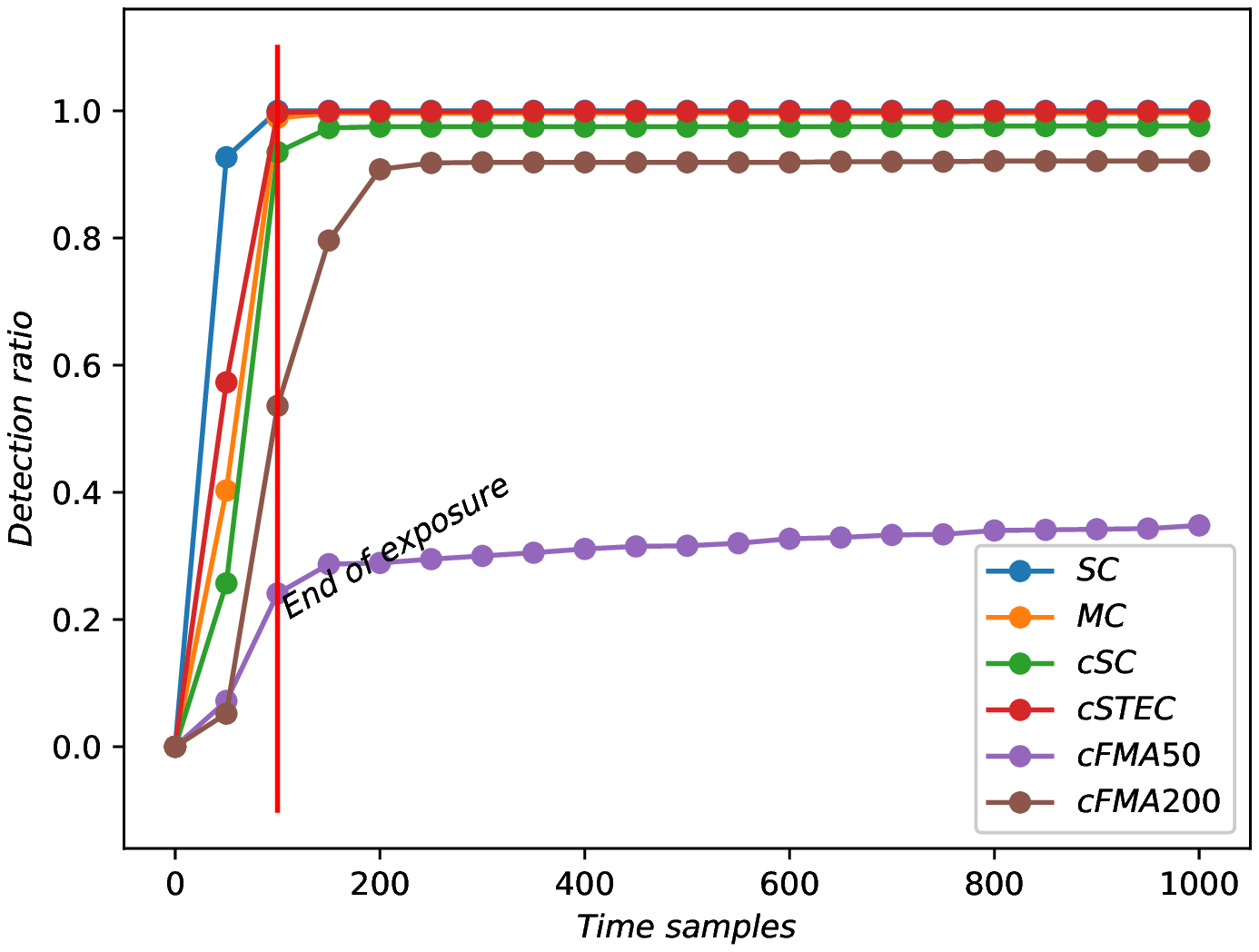}& 
    \includegraphics[width=0.33\hsize]{fig/Sync7_04.eps} \\
    \includegraphics[width=0.33\hsize]{fig/OSync3_04.eps} & \includegraphics[width=0.33\hsize]{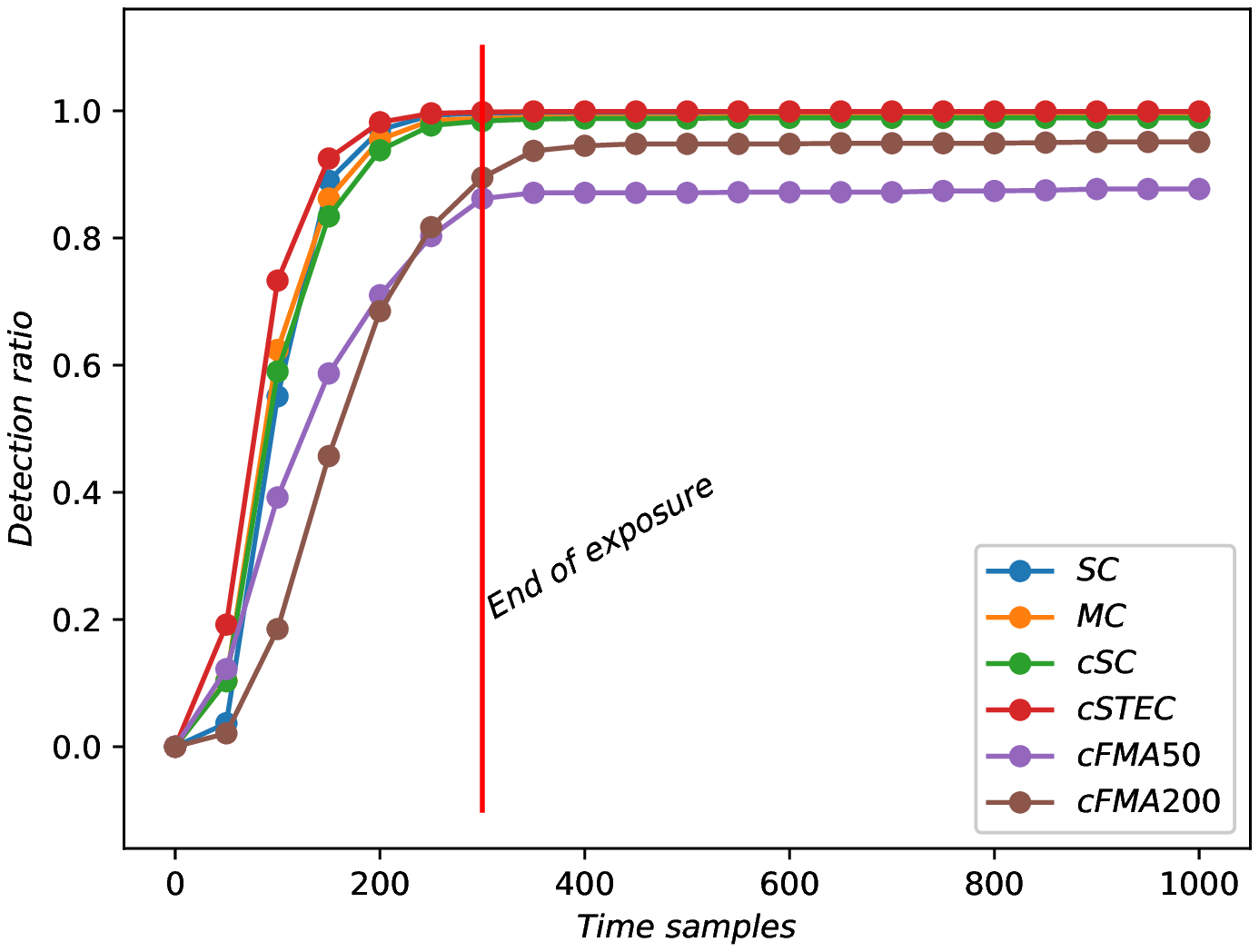}& 
    \includegraphics[width=0.33\hsize]{fig/OSync7_04.eps} \\
    \includegraphics[width=0.33\hsize]{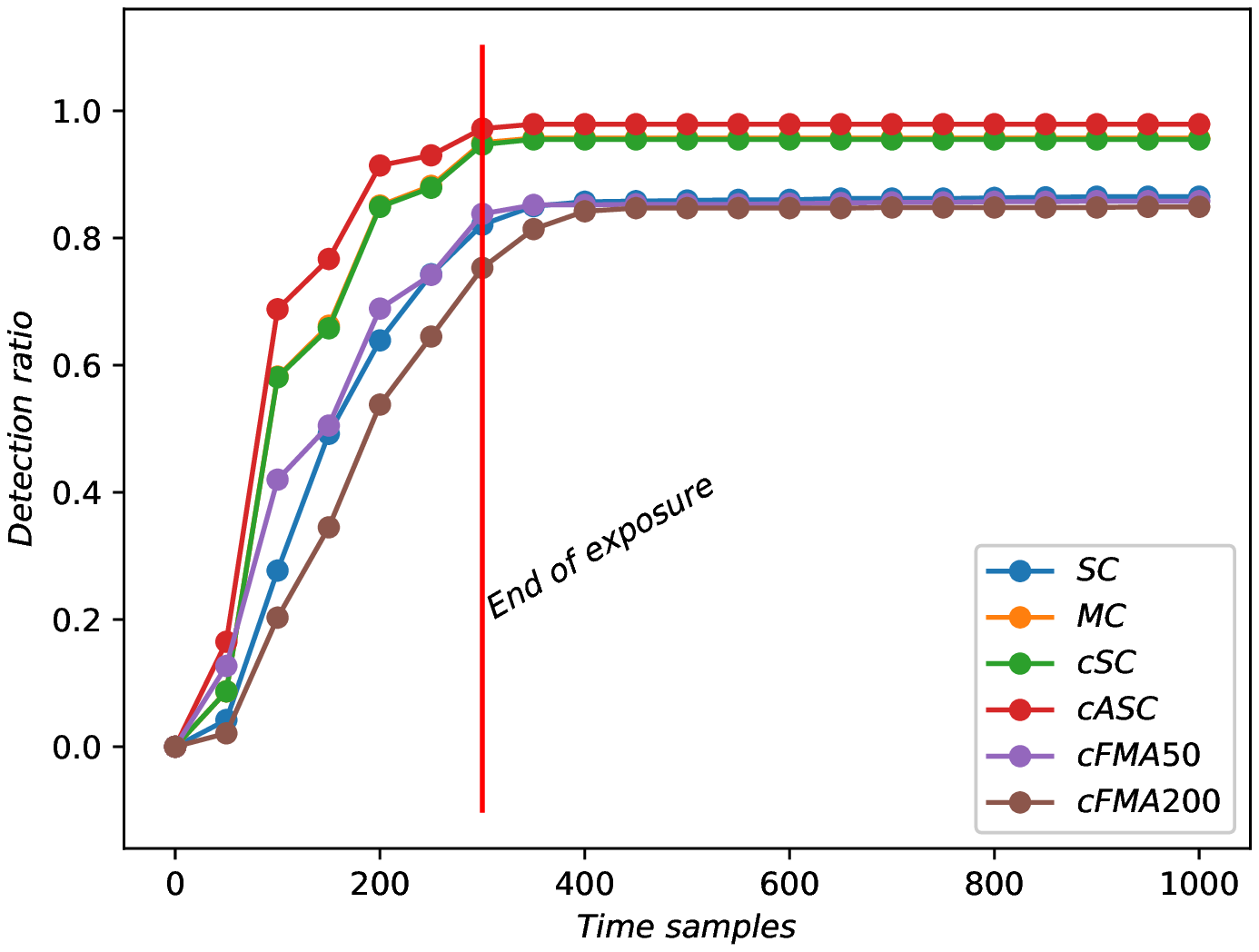} & \includegraphics[width=0.33\hsize]{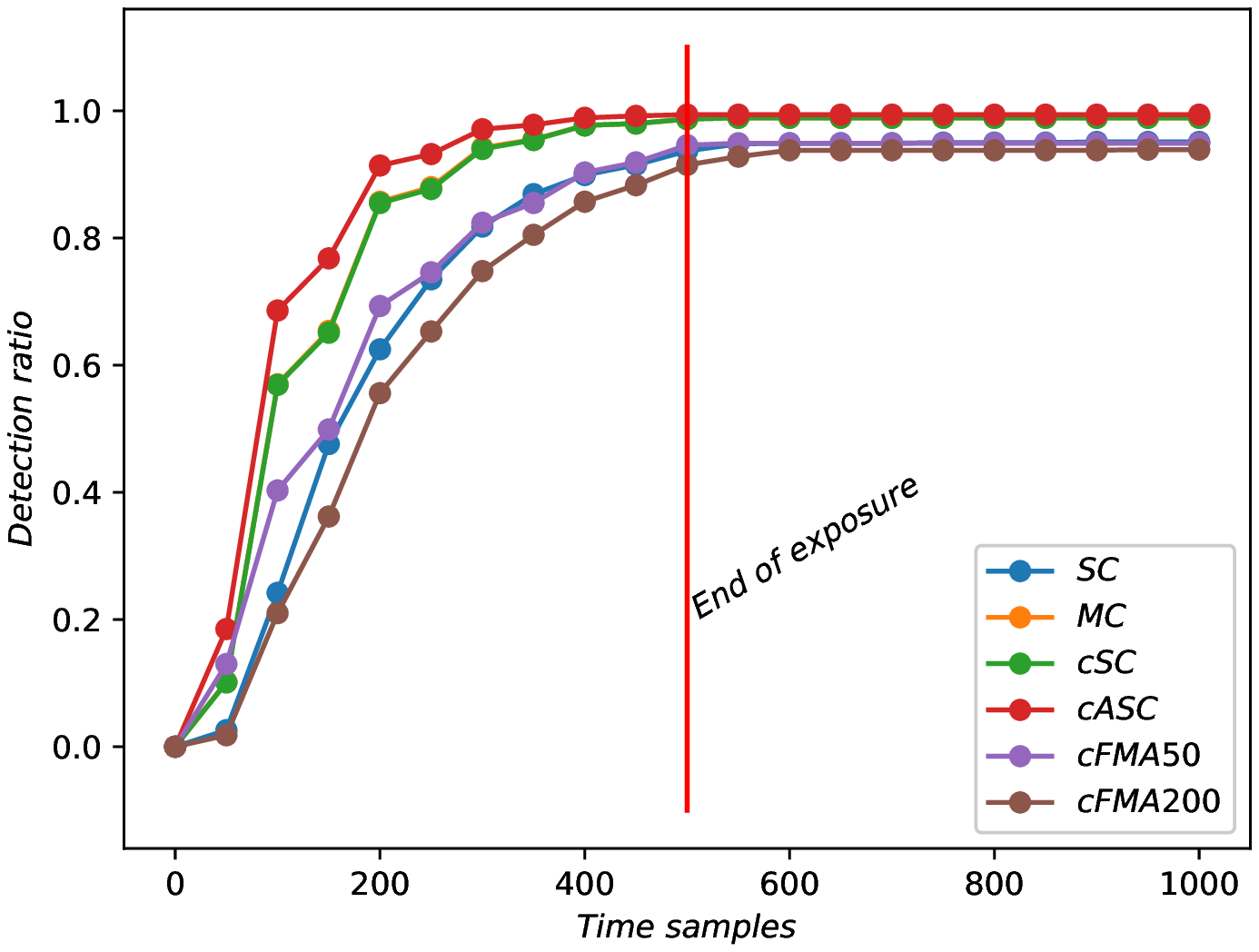}& 
    \includegraphics[width=0.33\hsize]{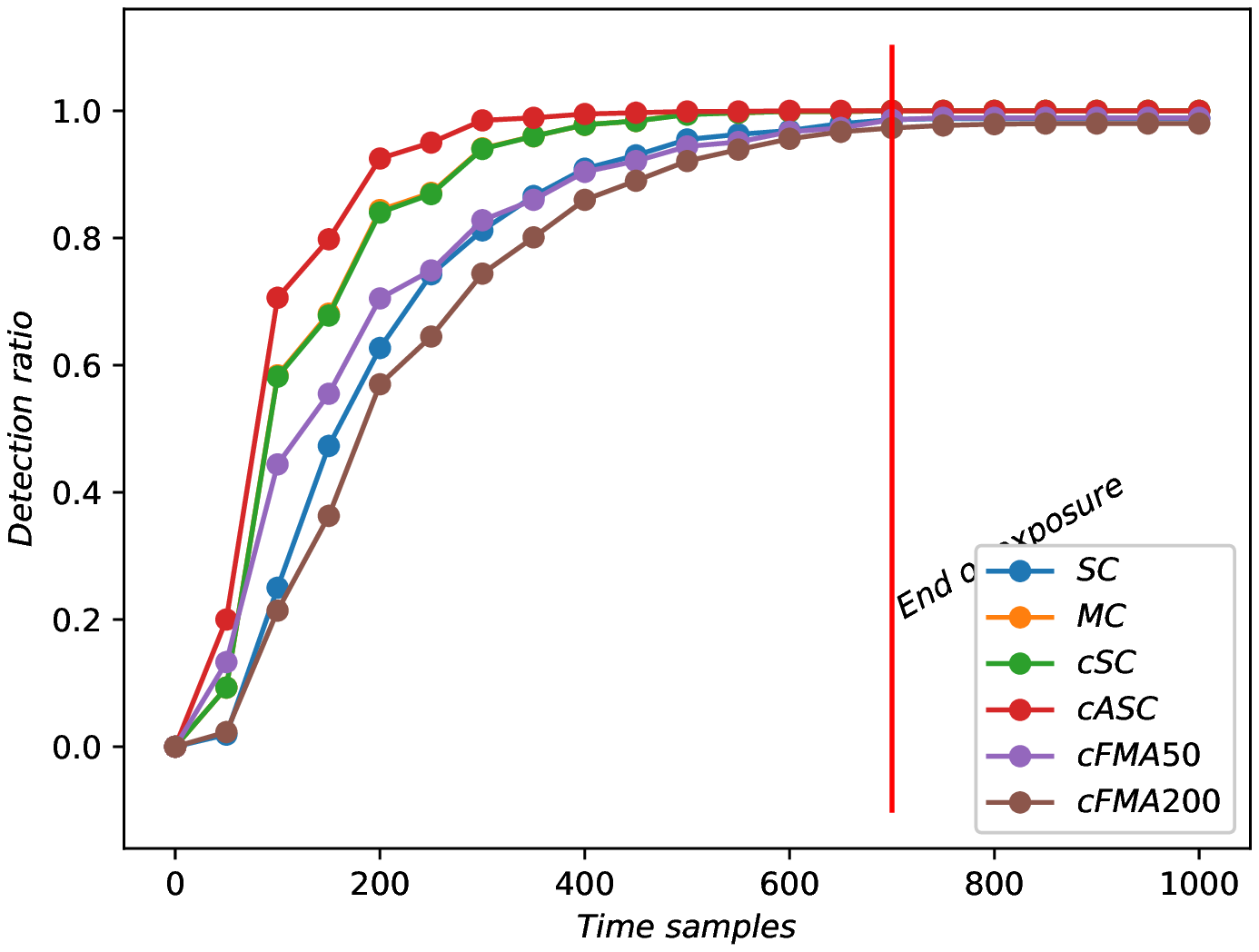} \\
    \includegraphics[width=0.33\hsize]{fig/OOOSync3_04.eps} & \includegraphics[width=0.33\hsize]{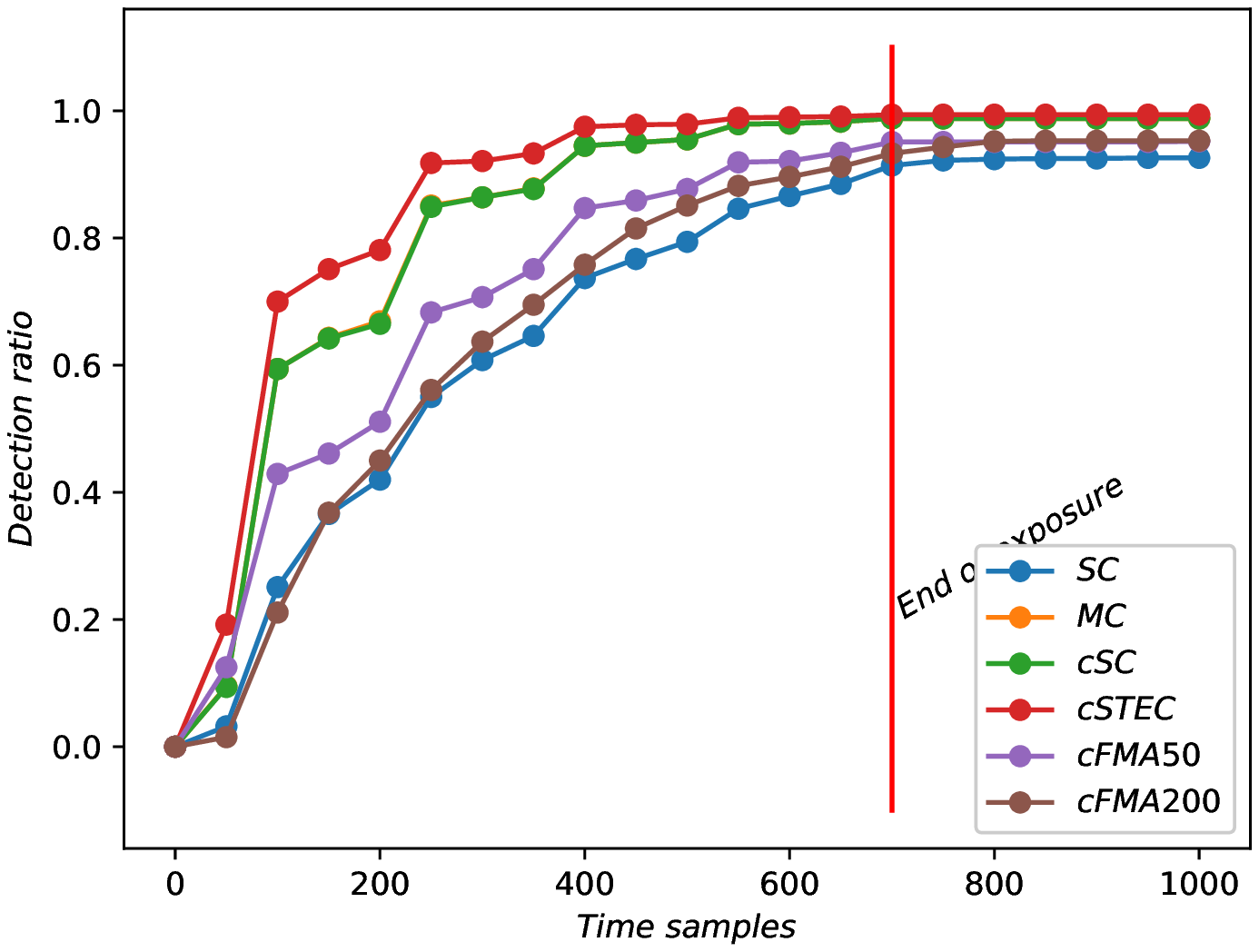}& 
    \includegraphics[width=0.33\hsize]{fig/OOOSync7_04.eps} \\
\end{tabular}
\caption{Results with 3/10 affected data streams on the first column, 5/10 on the second column and 7/10 on the third. With synchronised signals on the first row, a 50 time samples delay on the second, a full signal length delay on the third, and a 50 sample gap between two exposure on the fourth row. Signal amplitude is 0.4 and there is 100 time samples by exposure.}
\end{table}

\begin{table}[H]
\begin{tabular}{c c c}
    \includegraphics[width=0.33\hsize]{fig/Sync3_02.eps} & \includegraphics[width=0.33\hsize]{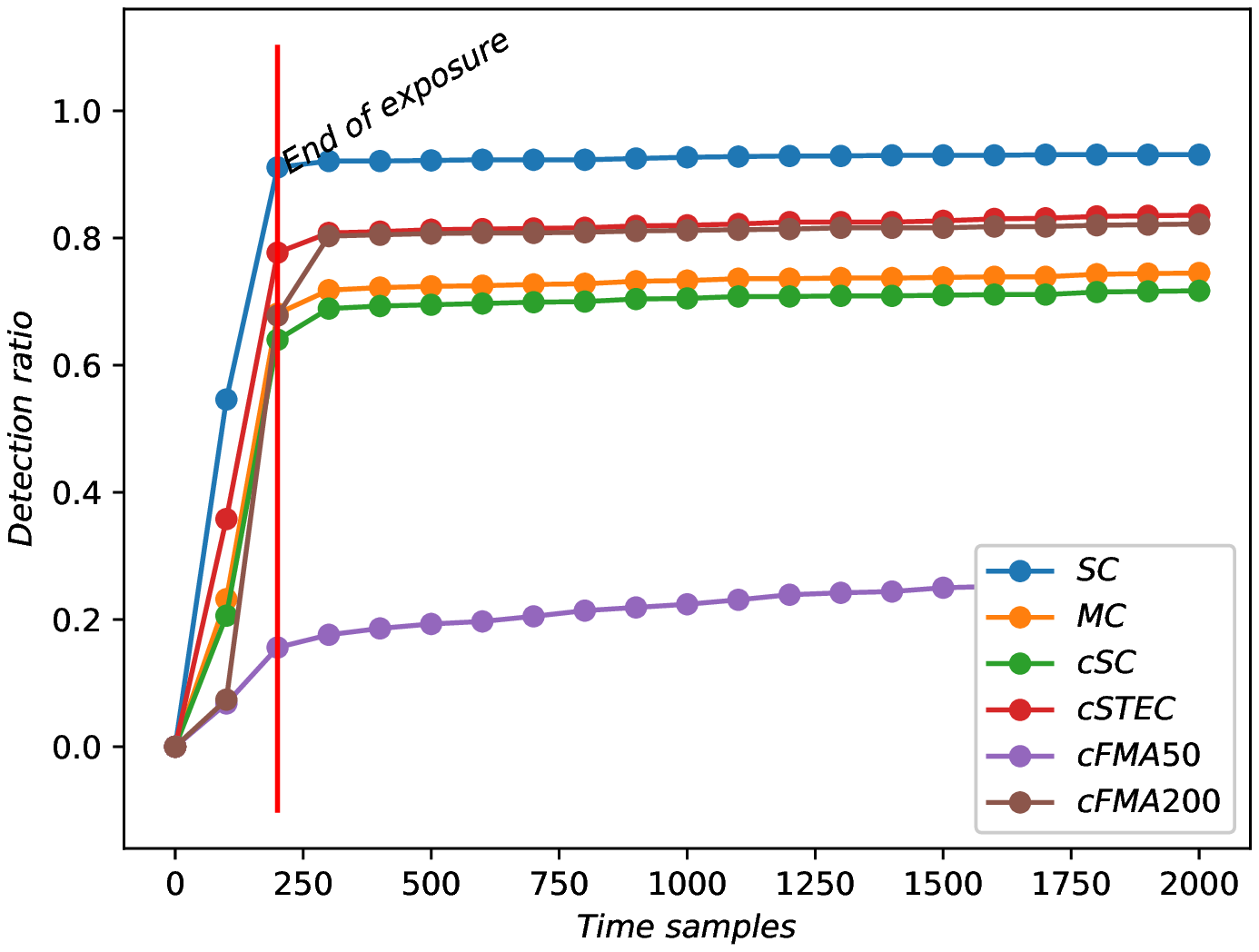}& 
    \includegraphics[width=0.33\hsize]{fig/Sync7_02.eps} \\
    \includegraphics[width=0.33\hsize]{fig/OSync3_02.eps} & \includegraphics[width=0.33\hsize]{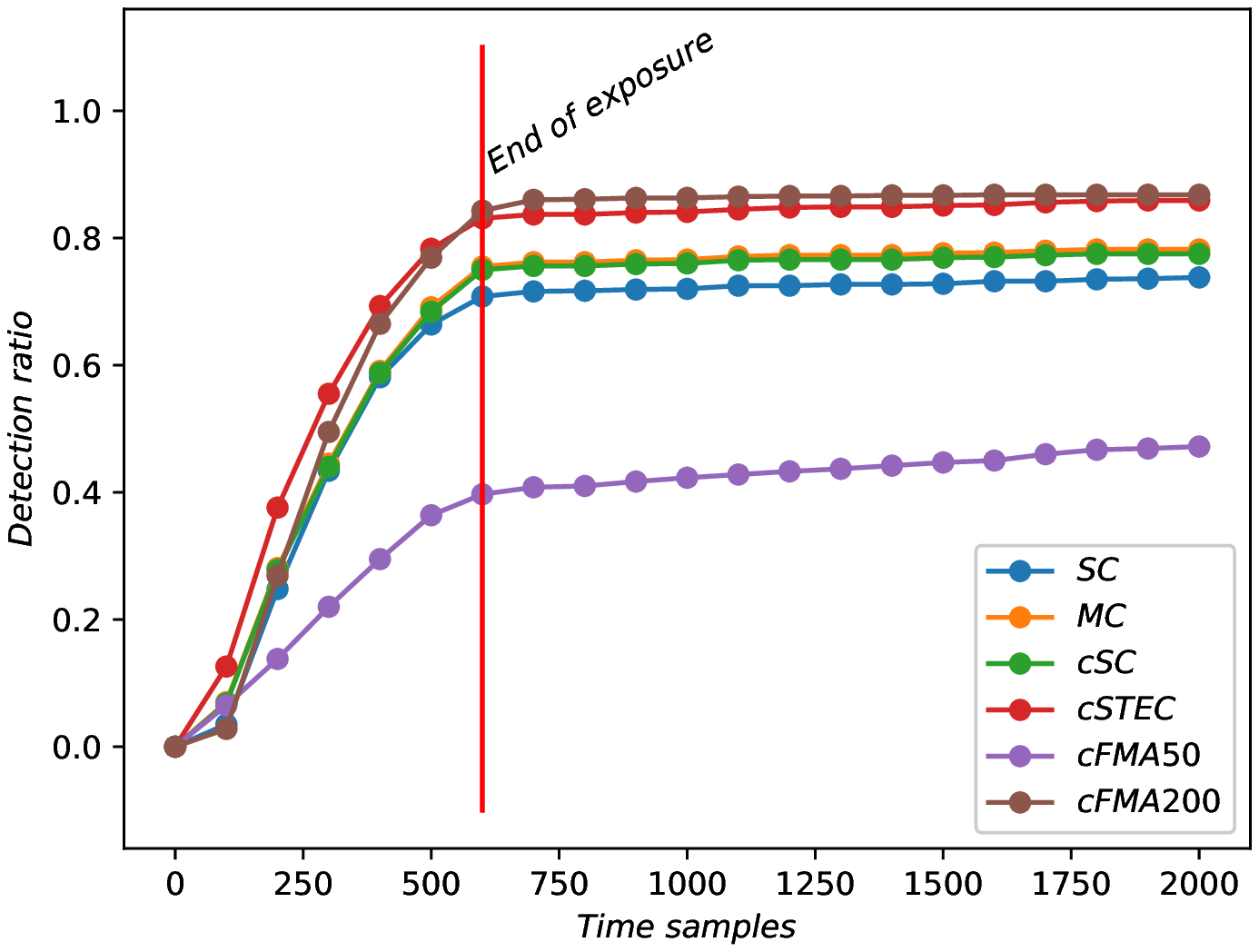}& 
    \includegraphics[width=0.33\hsize]{fig/OSync7_02.eps} \\
    \includegraphics[width=0.33\hsize]{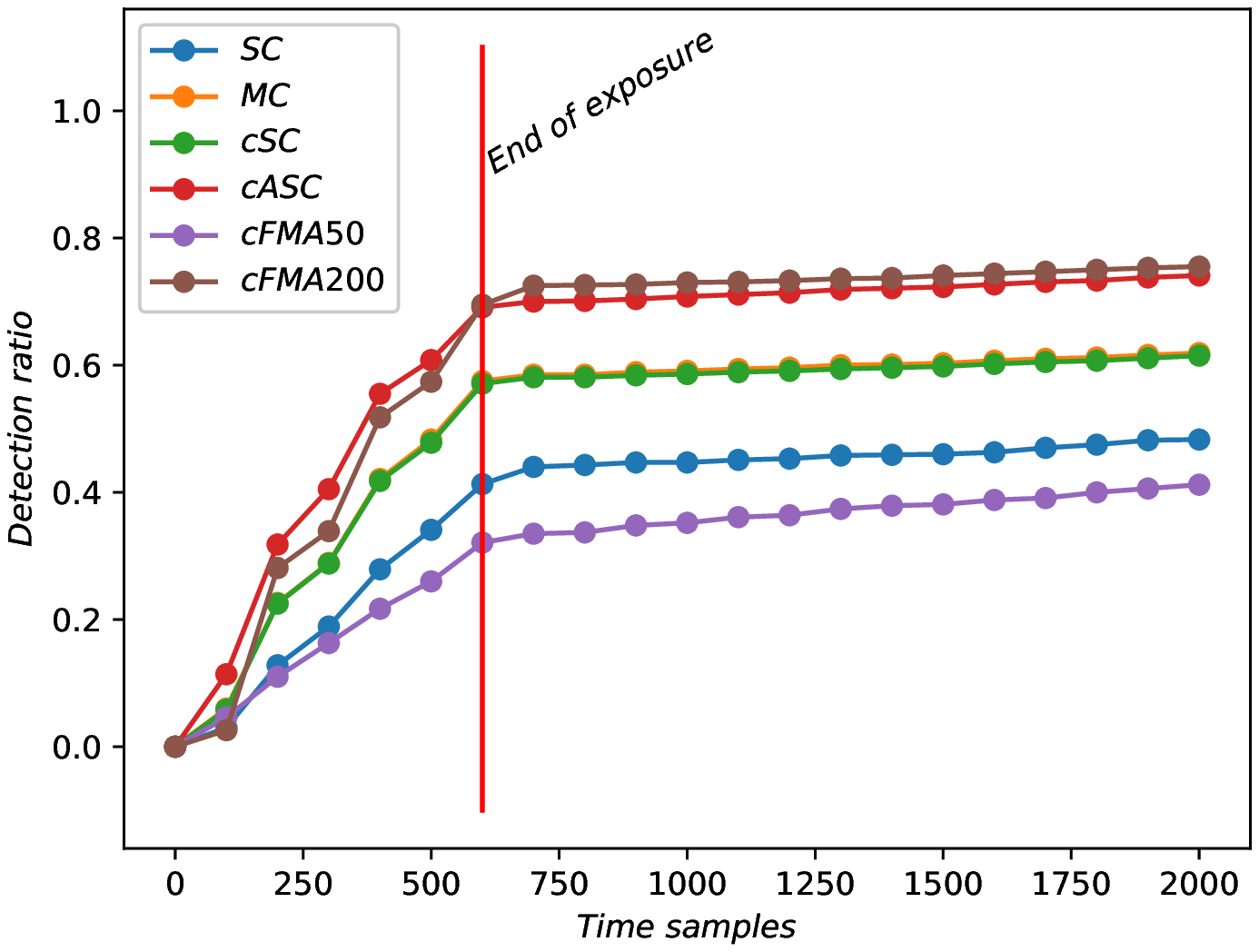} & \includegraphics[width=0.33\hsize]{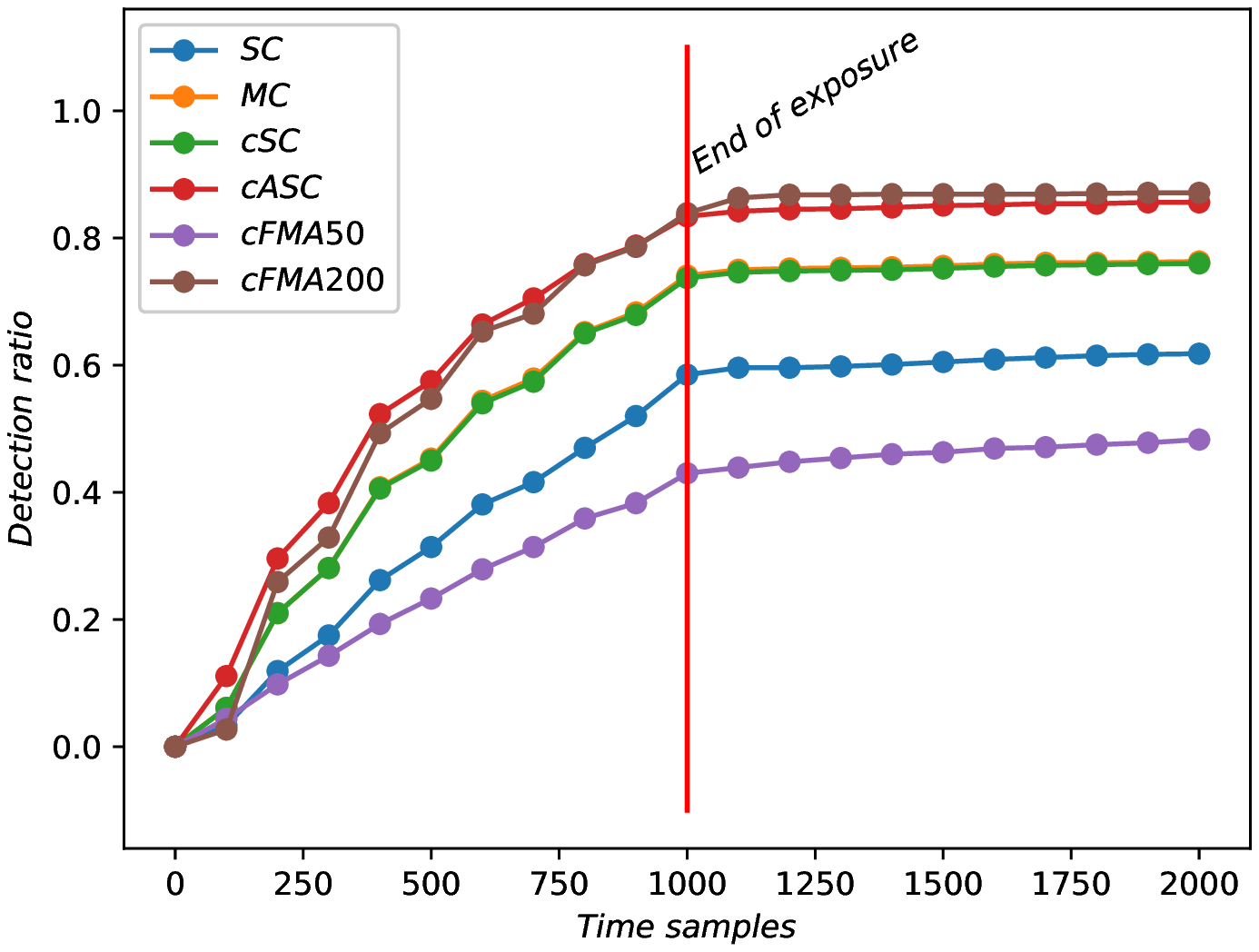}& 
    \includegraphics[width=0.33\hsize]{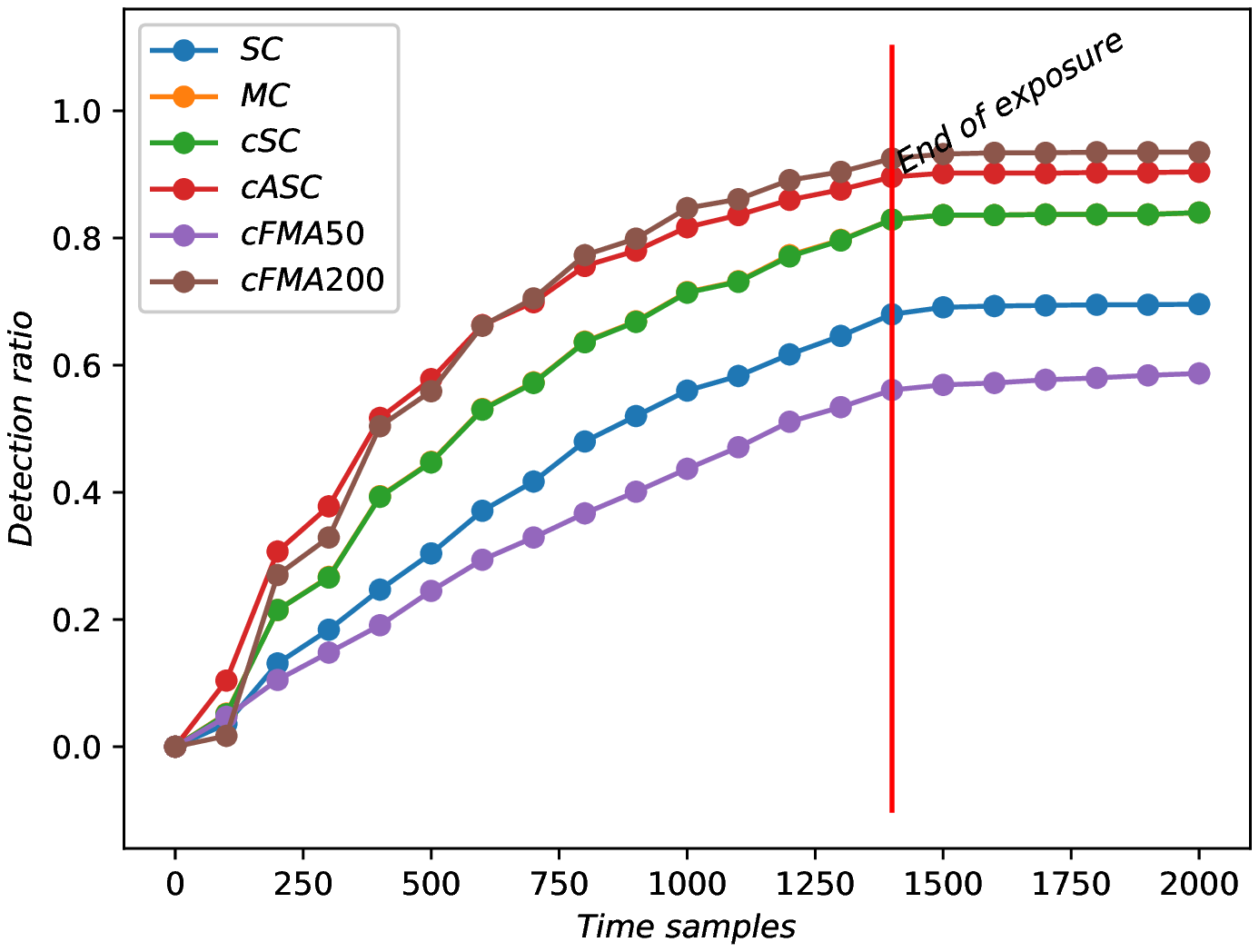} \\
    \includegraphics[width=0.33\hsize]{fig/OOOSync3_02.eps} & \includegraphics[width=0.33\hsize]{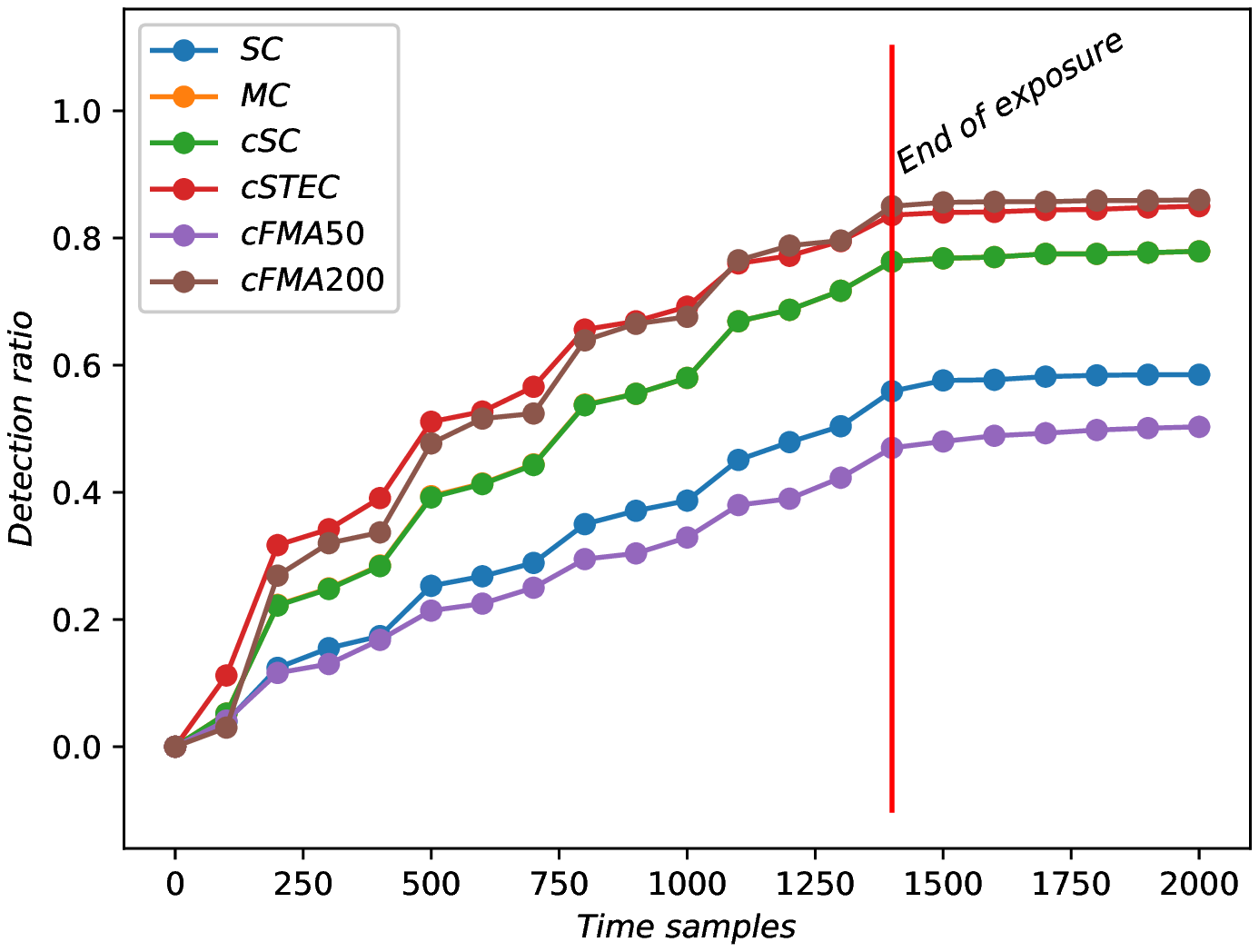}& 
    \includegraphics[width=0.33\hsize]{fig/OOOSync7_02.eps} \\
\end{tabular}
\caption{Results with 3/10 affected data streams on the first column, 5/10 on the second column and 7/10 on the third. With synchronised signals on the first row, a 100 time samples delay on the second, a full signal length delay on the third, and a 100 sample gap between two exposure on the fourth row. Signal amplitude is 0.2 and there is 200 time samples by exposure.}
\end{table}
\newpage
\begin{landscape}
\section{Computational resources}
\label{appendix2}

\begin{table}[H]
\hspace{0cm}
\begin{tabular}{|c| c| c| c|}
\hline
method & description & number of &  number of \\
 & & computations &  stored variables\\
\hline
MaxCUSUM & L access g + L tests  & 2L & L\\
\hline
SumCUSUM &  L access g + L sums & 2L & L\\
\hline
CensoredSC & L access g + L test + $<L$ sums   & $<3L$ & L\\
\hline
TE-CUSUM & same as cSC + L tests on G  & $<5L$ & 2L\\
\hline
FMA & $L\times 2$ (access + sums) + L tests + $<L$ sums&  $<6L$  & $w\times L$\\
\hline
\end{tabular}
\caption{Estimation of the computational cost of each method (L is the number of sensors and w is the window length for the FMA). In our cases $L << w$. Here the computation of the likelihood ratio is not displayed as it is common to all the techniques. The global decision test is not displayed for the same reason. \red{Computer operations have been categorised in 3 type of operation; memory access to local values of g, tests between two variables and sum. To simplify, every operation listed here are considered to have the same computational cost.}}
\end{table}
\vspace{-0.5cm}
As an example, we can describe the functioning of the TE-CUSUM based on the algorithm pseudo-code of \ref{appendix3}:

Dots $1,2$ and $7$ are common to all the techniques so these are not displayed in the comparison table.

Dots $1,2,3,4$ and $7^th$ are common to all the CUSUM based techniques. 

Dot $6$ is common to Censored Sum-CUSUM and Censored TE-CUSUM.

At every measure time:
\begin{itemize}
\vspace{-0.3cm}
    \item $L$ values are recorded by the sensors. 
    \vspace{-0.3cm}
    \item for each of these L values, the log-likelihood is computed
    \vspace{-0.3cm}
    \item $L$ values of W are loaded from the memory
    \vspace{-0.3cm}
    \item $L$ tests between W and W + L are performed to update W
    \vspace{-0.3cm}
    \item $L$ tests between G and W are performed to update G
    \vspace{-0.3cm}
    \item  $l<L$ Sum are performed to get the global test statistic
   \vspace{-0.3cm}
    \item the global test statistic is compared to a threshold to trigger or not detection
\end{itemize}

\end{landscape}

 \bibliographystyle{elsarticle-num} 
 \bibliography{main}

\end{document}